\documentclass[journal]{IEEEtran}
\usepackage{cite}
\usepackage{tabularx}
\usepackage{amsmath,amssymb,amsfonts}
\usepackage{algorithmic}
\usepackage{graphicx}
\usepackage{textcomp}
\usepackage{xcolor}
\usepackage{url}
\usepackage{bm}
\usepackage{multirow}
\usepackage{placeins} 
\usepackage[ruled,vlined]{algorithm2e}

\allowdisplaybreaks

\begin{document}

\title{Resilience Enhancement of Distribution Grids Through a Three-Stage Framework for Scheduling Mobile Energy Storage Systems}

\author{
Ali Abbasi, Kyri Baker\\
\textit{University of Colorado Boulder}\\
Boulder, CO, USA\\
\{ali.abbasi, kyri\}@colorado.edu
}

\maketitle

\begin{abstract}
As high-impact, low-probability disturbances such as extreme weather events increase in frequency, enhancing power grid resilience has become a critical priority. Mobile Energy Storage Systems (MESSs) have strong potential for helping with this purpose due to their high operational flexibility and fast deployability. They can be rapidly relocated to supply critical loads, support islanded operation, and adapt to changing conditions of the grid, making them a promising addition or alternative to conventional resilience-providing methods. This paper proposes a three-stage framework for scheduling MESS units.
In the first stage (Normal Operation), charging/discharging scheduling of MESSs at buses is optimized for economic arbitrage.
Upon receiving an early warning signal, the second stage (Proactive Positioning) shifts the focus to resilience. Here, the model determines the optimal staging points within the transportation network to minimize the expected time of arrival at critical loads, considering the probability distribution of line outages.
The third stage (Dynamic Relocation) addresses the post-event restoration phase, in which MESSs are relocated in response to grid updates, such as subsequent line failures.
Furthermore, a graph neural network-based solution approach is proposed to further enhance computational efficiency in determining the optimal placing of MESS units during the pre-allocation and relocation stages.
The simulation results demonstrate that the proposed framework reduces expected energy not served and improves the restoration speed.
\end{abstract}

\begin{IEEEkeywords}
Mobile energy storage, grid resilience, graph neural networks, transportation-power nexus, multi-stage optimization
\end{IEEEkeywords}

\section*{Nomenclature}
\label{sec:Nomenclature}

\subsection*{\textit{Sets and Indices}}

\begin{IEEEdescription}[\IEEEsetlabelwidth{$k \in \mathcal{K}$}]
\item[$b \in \mathcal{B}$] Set of buses
\item[$k \in \mathcal{K}$] Set of energy storage units
\item[$l \in \mathcal{L}$] Set of distribution lines
\item[$t \in \mathcal{T}$] Set of time intervals
\item[$s \in \mathcal{S}$] Set of scenarios
\end{IEEEdescription}

\subsection*{\textit{Parameters}}
\begin{IEEEdescription}[\IEEEsetlabelwidth{$C^{\mathrm{VoLL}}_b$}]
\item[$\eta_{\mathrm{ch/dis}}$] Charging/Discharging efficiency
\item[$\pi_s$] Probability of scenario $s$
\item[$r(l)$] Receiving-end bus of line $l$
\item[$s(l)$] Sending-end bus of line $l$
\item[$\overline{E}_k$] Energy rating of energy storage units [MWh]
\item[$\overline{P}_k$] Power rating of energy storage units [MW]
\item[$B_b$] Susceptance of bus $b$ [$\Omega^{-1}$]
\item[$C^{\mathrm{g}}_b$] Incremental cost of distributed generator [\$]
\item[$C^{\mathrm{VoLL}}_b$] Value of lost load [\$/MWh]
\item[$G_b$] Conductance of bus $b$ [$\Omega$]
\item[$K$] Power factor calculation coefficient
\item[$P^{\mathrm{d}}_{bts}$] Real power demand [MW]
\item[$Q^{\mathrm{d}}_{bts}$] Reactive power demand [MVar]
\item[$R_l$] Resistance of distribution line $l$ [$\Omega$]
\item[$S_l$] Apparent flow limit of distribution line $l$ [MVA]
\item[$u_{kb}$] Transit route parameter of ES unit in mobile mode: 1 if ES unit $k$ is located at bus $b$, 0 otherwise
\item[$X_l$] Reactance of distribution line $l$ [$\Omega$]
\end{IEEEdescription}

\vspace{1em}

\subsection*{\textit{Binary Variables}}
\begin{IEEEdescription}[\IEEEsetlabelwidth{$\sigma^{\mathrm{d}}_b$}]
\item[$\sigma^{\mathrm{d}}_b$] Load switch state variable: 1 if load at bus $b$ is connected, 0 otherwise
\item[$\sigma^{\mathrm{l}}_l$] Line switch state variable: 1 if line $l$ is closed, 0 otherwise
\item[$z_{kb}$] Stationary location variable of ES unit: 1 if ES unit $k$ is located at bus $b$, 0 otherwise
\end{IEEEdescription}

\vspace{1em}

\subsection*{\textit{Variables}}
\begin{IEEEdescription}[\IEEEsetlabelwidth{$P^{\mathrm{ch/dis}}_{kbt}$}]
\item[$a_{lt}$] Current flow of distribution line $l$ [p.u.]
\item[$e_{kt}$] Energy state-of-charge [MWh]
\item[$f^{\mathrm{p/q}}_l$] Real/reactive power flow of line $l$ [MW/MVar]
\item[$P^{\mathrm{ch/dis}}_{kbt}$] Charging/discharging real power decision [MW]
\item[$P^{\mathrm{g}}_{bt}$] Real power output of distributed generator [MW]
\item[$q^{\mathrm{ch/dis}}_{kbt}$] Charging/discharging reactive power decision [MVar]
\item[$q^{\mathrm{g}}_{bts}$] Reactive power output of distributed generator [MVar]
\item[$v_{bt}$] Nodal voltage magnitude of bus $b$ [p.u.]
\end{IEEEdescription}

\section{Introduction}
\label{sec:introduction}

\IEEEPARstart{D}{istribution} networks (DNs) are particularly vulnerable to high-impact, low-probability disturbances such as hurricanes, floods, wildfires, and other extreme events, which can lead to widespread customer outages \cite{shi2022enhancing}. As these threats have increased in both intensity and frequency \cite{paul2024resilience}, resilience has become a central objective in the planning and operation of DNs, which unlike reliability, measures the ability to continue to serve critical loads and minimize unserved energy in low-probability extreme events \cite{paul2024resilience}. Enhancing the resilience of DNs, therefore, becomes essential for reducing the societal and economic impacts of prolonged outages.

Mobile energy storage systems (MESSs), typically implemented as battery storage units mounted on trucks or trailers, have emerged as a promising resilience solution for modern DNs due to their ability to combine the fast response of battery energy storage with physical mobility across the network \cite{chuangpishit2023mobile,dugan2021application}. Unlike conventional stationary storage, which is tied to its installation site, MESSs can be transported to vulnerable areas before an extreme event, be dispatched to outage locations after damage occurs, and be repositioned during the restoration process as system conditions evolve, which makes them especially attractive for resilience enhancement. In addition, MESSs can provide value during normal operating conditions, such as performing spatio-temporal energy arbitrage, which helps justify their investment beyond rare emergency events \cite{kim2018enhancing}.

\subsection{Literature Review}
\hspace*{0.25in}Previous works have explored a range of modeling, scheduling, and application aspects of MESSs in resilience-enhancement \cite{dugan2021application,lu2024mobile,aslam2025application}.
Specific scenarios are considered, like earthquake \cite{rajabzadeh2022improving}, ice disaster \cite{guo2023mobile}, typhoon \cite{zhou2024bi}, rainfalls \cite{hua2023robust}, or the scheduling is solved for the general case. However, much of the prior work remains limited to the post-disaster restoration-only scheduling, where the objective is usually to minimize load shedding. In these models, MESS routing and charging/discharging are co-optimized with other emergency actions such as network reconfiguration, microgrid dispatch, repair scheduling, or critical-load restoration. For example, in \cite{yao2019rolling}, Yao \textit{et al.} modeled MESS operation using a stochastic multi-layer time-space network and solved a rolling two-stage stochastic MILP for coordinated restoration. Nazemi \textit{et al.} developed a stochastic optimization model with joint probabilistic constraints in \cite{9372331}, later reformulated as a tractable MILP, for routing and scheduling MESS units jointly with dynamic reconfiguration and stochastic renewable energy sources. In \cite{shen2023mobile}, Shen \textit{et al.} proposed a bilevel post-disaster recovery model in which the upper level minimizes total load curtailment and the lower level minimizes voltage offset during recovery.
While exploring post-disaster recovery provides valuable insights into this stage, focusing solely on post-disaster recovery ignores the interconnected stages of MESS scheduling. As successful restoration depends on earlier actions like pre-positioning, neglecting the full timeline leads to inefficient MESS utilization, slower responses, and reduced overall resilience.

Multi-stage MESS scheduling has also been widely explored in the literature, such as \cite{10311549,9634033,app142210367,10049743}. In these works, the problem is typically formulated as a multi-stage optimization task, where the locations, movements, and charging/discharging decisions of MESSs are coordinated across several stages—such as preventive positioning, emergency operation, and restoration—to minimize interruption following outages. 
In \cite{10311549} for example, Chen \textit{et al.} proposed a two-stage stochastic mixed-integer programming (SMIP) model, with the first stage formulating the planning decisions for MESSs, and the second stage evaluating the operating costs of the DN under normal, severe, and extreme scenarios. 
A multi-stage optimization method for MESS scheduling in a coupled transportation and distribution network has been proposed in \cite{9634033}. In this work, authors use real-time traffic information and update routing and charging/discharging decisions of MESSs under uncertainties such as photovoltaic (PV) penetration; however, the main focus is on PV accommodation and economic dispatch rather than resilience-driven post-disaster restoration.
Lei \textit{et al.} optimized capacity sizing of MESSs in \cite{app142210367} as an often neglected aspect of scheduling of these units. This feature is optimized along with pre-positioning of MESSs before a natural disaster in the first stage of the problem formulation, while in the post-disaster second stage, re-allocation of MESSs and their power outputs are considered to be adjusted based on the damage scenarios.
A customer-centric three-stage MESS planning framework was presented \cite{10049743}, that decomposes the problem into pre-outage battery delivery, in-outage battery relocation, and post-outage battery recycling, focusing on direct residential backup service with multiple carriers.
These formulations represent a clear improvement over single-stage models as they capture part of the temporal coupling. However, they still often remain focused on the disaster-response horizon only, 
While it is more economically efficient to employ MESSs during normal operation as well to, avoid leaving them idle until an emergency occurs and they are dispatched to assigned buses.

\subsection{Contributions}
Previous works have considered incorporating MESSs under normal conditions \cite{kim2018enhancing}; however, we can further introduce an additional level of optimality to the overall scheduling. An intermediate stage between pre-outage and post-outage MESS scheduling is often overlooked: the urgency of delivering supply to critical loads immediately following an outage. The key challenge is that addressing this urgency requires allocating MESSs on the transportation network (TN) level, while to be able to contribute during normal operation, they have to be located at DN buses. In other words, MESSs are first located at buses during normal operation, and are later dispatched to buses for post-outage support; however, there is a phase in between, during which, they need to be staged on the transportation roads, so that their expected time of arrival (ETA) at critical buses would be minimal. 
Additionally, the dynamic nature of post-outage grid conditions is often overlooked. If MESS schedules are not continuously adapted to real-time changes—such as secondary damages, they can only provide sub-optimal power, resulting in an inefficient use of these resources.
Considering these additional stages, the problem can now be formulated as a three-stage optimization framework. Figure~\ref{fig:flowchart} shows a flow chart of transitions between these stages.
Furthermore, to enable sufficiently rapid responses to system updates during the relocation stage, and the evaluation of multiple scenarios in the pre-allocation stage and identifying the optimal strategy, we propose a deep learning (DL)-based approach utilizing on graph neural networks (GNNs) to reduce computational burden and improve responsiveness.

The main contributions of this paper are summarized as follows:
\begin{itemize}
    \item We propose a strategy that bridges the gap between daily normal operation and emergency preparedness. The proposed framework utilizes MESSs for price arbitrage during normal grid conditions and shifts to an optimized pre-allocation strategy on the TN upon receiving warning signals.

    \item To address the computational intractability of the formulated problem for the pre-allocation stage, we propose an algorithm to decouple the original problem into three computationally solvable sub-problems.
    
    \item We develop a dynamic response mechanism that continuously monitors the post-outage environment. This strategy accounts for the evolving nature of the system during outages, including updates on repair progress of damaged lines, the emergence of secondary contingencies, and variations in state-of-charge (SoC) of MESSs.

    \item To address the computational burden of traditional optimization methods, we propose a Deep Learning-based solution. By shifting the heavy computation of the conventional solution to an offline training phase, we leverage the fast inference capabilities of neural networks to map grid states to optimal dispatch and control decisions. This provides near-instantaneous responses to rapid network updates that might not be computationally possible using standard solvers.
\end{itemize}

The remainder of this paper is organized as follows. Section II details the multi-stage mathematical problem formulation, including the normal operation stage, pre-allocation on the transportation network, and active relocation. Section III presents the DL-based solution architecture. The numerical simulation results are discussed in Section IV. Finally, Section V concludes the paper and provides potential directions for future research.

\begin{figure}[t]
    \centering
    \includegraphics[width=0.95\linewidth]{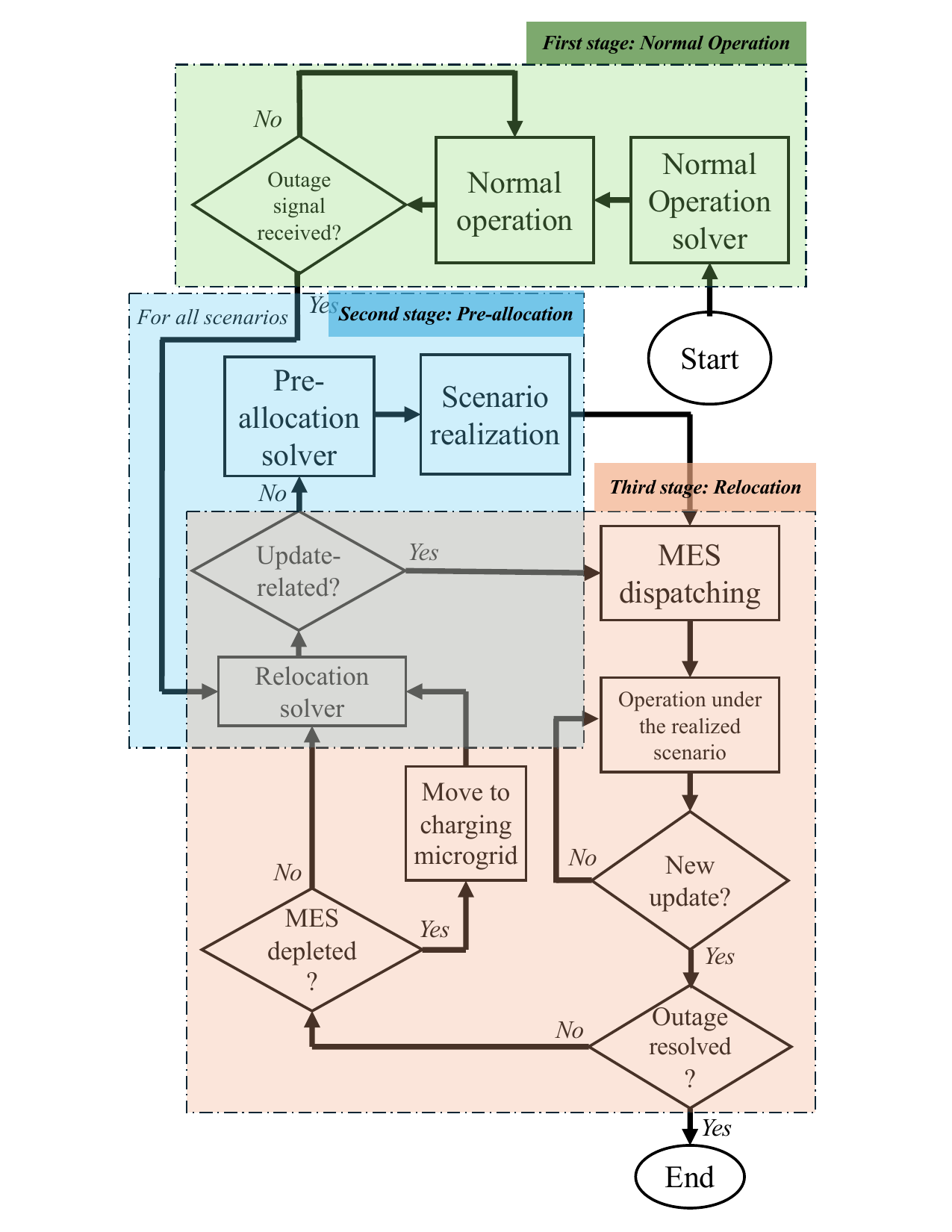}
    \caption{Flow chart of transitions between different stages.}
    \label{fig:flowchart}
\end{figure}

\section{Problem Formulation and the Solution}
\label{section: Problem Formulation and Solution}

In this section, we present the problem formulation for the proposed three-stage MESS scheduling framework and its solution. We assume a radial grid topology, as it is the most common architecture in distribution systems \cite{ahmadi2015mathematical}, coupled with a transportation network. Our problem formulation includes three stages, providing a comprehensive model of the system's operation. During the first stage, i.e., Normal Operation, MESS units are fixed at bus locations to perform energy arbitrage. Once a warning signal for an outage within a grid zone is received, the system transitions to the Pre-allocation stage. Here, MESS units are moved to optimal positions within the transportation network to minimize expected travel times. Finally, during the Dynamic Relocation stage, the DSO continuously monitors the system and relocates MESS units in response to real-time grid updates, such as new fault occurrences or repairs. Fig. 1 shows a flow chart of transitions between the three stages.

\subsection{Normal Operation}
During normal operation, the DSO utilizes MESS units to perform energy arbitrage. These units are assumed to be at fixed locations at buses at this stage, as the transportation costs and operational downtime associated with relocation would outweigh the marginal economic gains from relocating MESS units during this stage. Consequently, the goal of this stage is to determine the optimal placement of MESS units and their optimal charging and discharging schedules at fixed locations.

The objective function for this stage is as follows:
\begin{equation}
\label{eq:normal_operation_objective}
\min \sum_{t \in T, b \in B} C_{bt}^{g} p_{bt}^{g}
\end{equation}
The objective function for this stage only aims to minimize the cost of the generated power by generators. The decision variables are the placements of MESS units, which are restricted to buses, and their charging/discharging rates at different times.

The distribution system constraints at this stage are as follows:
\begin{subequations}
\begin{flalign}
& (f^p_{lt})^2 + (f^q_{lt})^2 \leq S_{l}^{2}, \quad \forall l \in \mathcal{L} & \label{eq:normal_operation_constraints_a} \\
& (f^p_{lt} - a_{lt}^2 \cdot R_l)^2 + (f^q_{lt} - a_{lt}^2 \cdot X_l)^2 \leq S_l^2, \quad \forall l \in \mathcal{L} & \label{eq:normal_operation_constraints_b} \\
& v_{s(l),t}^2 - 2(R_{l} \cdot f_{lt}^{p} + X_{l} \cdot f_{lt}^{q}) + a_{lt}^2(R_{l}^{2} + X_{l}^{2}) \nonumber & \\
& = v_{r(l),t}^2, \quad \forall l \in \mathcal{L} & \label{eq:normal_operation_constraints_c} \\
& \frac{(f_{lt}^{p})^{2} + (f_{lt}^{q})^{2}}{a_{lt}^2} \leq v_{s(l),t}^2, \quad \forall l \in \mathcal{L} & \label{eq:normal_operation_constraints_d} \\
& - \sum_{l|s(l)=0} (f^p_{lt} - a_{lt}^2 \cdot R_l) - p^g_{0,t} + G_0 \cdot v_{0,t}^2 = 0 & \label{eq:normal_operation_constraints_e} \\
& - \sum_{l|s(l)=0} (f^q_{lt} - a_{lt}^2 \cdot X_l) - q^g_{0,t} - B_0 \cdot v_{0,t}^2 = 0 & \label{eq:normal_operation_constraints_f} \\
& f^p_{bt} - \sum_{l|r(l)=b} (f^p_{lt} - a_{lt}^2 \cdot R_l) - p^g_{bt} + P^d_{bt} + G_b \cdot v_{bt}^2 \nonumber & \\
& \quad \quad - \sum_{k \in \mathcal{K}} p^{\text{dis}}_{kbt} + \sum_{k \in \mathcal{K}} p^{\text{ch}}_{kbt} = 0, \quad \forall b \in \mathcal{B} & \label{eq:normal_operation_constraints_g} \\
& f^q_{bt} - \sum_{l|r(l)=b} (f^q_{lt} - a_{lt}^2 \cdot X_l) - q^g_{bt} + Q^d_b - B_b \cdot v_{bt}^2 \nonumber & \\
& \quad \quad - \sum_{k \in \mathcal{K}} q^{\text{dis}}_{kbt} + \sum_{k \in \mathcal{K}} q^{\text{ch}}_{kbt} = 0, \quad \forall b \in \mathcal{B} & \label{eq:normal_operation_constraints_h} \\
& \underline{P}^g_b \leq p^g_{bt} \leq \overline{P}^g_b, \quad \forall b \in \mathcal{B}^G & \label{eq:normal_operation_constraints_i} \\
& \underline{Q}^g_b \leq q^g_{bt} \leq \overline{Q}^g_b, \quad \forall b \in \mathcal{B}^G & \label{eq:normal_operation_constraints_j} \\
& \underline{V}_{b} \leq v_{bt} \leq \overline{V}_{b}, \quad \forall b \in \mathcal{B} & \label{eq:normal_operation_constraints_k}
\end{flalign}
\end{subequations}

This set of constraints formulates the physical operations of the radial distribution grid using the Second-Order Cone Programming (SOCP) relaxation of the branch flow (DistFlow) equations \cite{6507355}. This model enforces physical constraints while remaining computationally tractable; and is known to be exact for radial topologies under a set of mild, unrestrictive assumptions \cite{6815671}. In case of meshed networks, other AC power flow models such as LinDistFlow \cite{25627} could be used instead. Specifically, the formulation enforces nodal power balance for both active and reactive power (Eqs. \eqref{eq:normal_operation_constraints_e}–\eqref{eq:normal_operation_constraints_h}), making the power flowing into a bus match the power consumed by loads, injected by generators, or exchanged (charged/discharged) by MESS units, at the root bus (Eqs. (\eqref{eq:normal_operation_constraints_e}-\eqref{eq:normal_operation_constraints_f}), or other buses (Eqs. \eqref{eq:normal_operation_constraints_g}-\eqref{eq:normal_operation_constraints_h}). Critical safety limits are also enforced in \eqref{eq:normal_operation_constraints_a} and \eqref{eq:normal_operation_constraints_b} for thermal capacity limits of lines, \eqref{eq:normal_operation_constraints_k} for voltage magnitude limits of buses, and \eqref{eq:normal_operation_constraints_i} and \eqref{eq:normal_operation_constraints_j} for generation capacity limits, while \eqref{eq:normal_operation_constraints_c} and \eqref{eq:normal_operation_constraints_d} link voltage drops and current flows via convex relaxation.

The following set of constraints model MESS units.
\begin{subequations}
\begin{flalign}
& e_{kt} = e_{k,t-1,} + \sum_{b \in \mathcal{B}} (p_{kbt}^{\text{ch}} \cdot \eta_{\text{ch}} - p_{kbt}^{\text{dis}}/\eta_{\text{dis}}) & \label{eq:mes_norm_constraints_a} \\
& 0 \leq e_{kt} \leq \overline{E}_k & \label{eq:mes_norm_constraints_b} \\
& e_{k,t_0} = e_{k,t_{24}} = 0.5 \overline{E}_k  & \label{eq:mes_norm_constraints_c} \\
& 0 \leq p_{kbt}^{\text{ch}} \leq \overline{P}_k \cdot z_{kb} \cdot (1 - \mathbb{I}(p_{kbt}^{\text{dis}} > 0)), \quad \forall b \in \mathcal{B} & \label{eq:mes_norm_constraints_d} \\
& 0 \leq p_{kbt}^{\text{dis}} \leq \overline{P}_k \cdot z_{kb} \cdot (1 - \mathbb{I}(p_{kbt}^{\text{ch}} > 0)), \quad \forall b \in \mathcal{B} & \label{eq:mes_norm_constraints_e} \\
& \sum_{b \in \mathcal{B}} z_{kb} = 1, \quad \forall k \in \mathcal{K} & \label{eq:mes_norm_constraints_f} \\
& -K \cdot p_{kbt}^{\text{ch}} \leq q_{kbt}^{\text{ch}} \leq K \cdot p_{kbt}^{\text{ch}}, \quad \forall b \in \mathcal{B} & \label{eq:mes_norm_constraints_g} \\
& -K \cdot p_{kbt}^{\text{dis}} \leq q_{kbt}^{\text{dis}} \leq K \cdot p_{kbt}^{\text{dis}}, \quad \forall b \in \mathcal{B} & \label{eq:mes_norm_constraints_h}
\end{flalign}
\end{subequations}

Battery state of charge dynamics are defined in \eqref{eq:mes_norm_constraints_a}, while capacity limits are enforced in \eqref{eq:mes_norm_constraints_b}. Equation \eqref{eq:mes_norm_constraints_c} ensures that the initial and final State of Charge (SoC) for the units are constrained to be equal, which serves as a boundary condition for the scheduling horizon. We enforce in \eqref{eq:mes_norm_constraints_d} and \eqref{eq:mes_norm_constraints_e} power rating limits for charging and discharging; and also that an MESS unit cannot inject or absorb power from a bus unless it is physically located there by multiplying the binary location variable $z_{kb}$ by $\overline{P}_k$. By further multiplying that by $(1 - \mathbb{I}(p_{kbt}^{\text{ch/dis}} > 0))$, we prevent charging and discharging happening at the same time, where $\mathbb{I}(\cdot)$ stands for the indicator function. Eq. \eqref{eq:mes_norm_constraints_f} ensures that each MESS unit should be located at exactly one bus. Constraints \eqref{eq:mes_norm_constraints_g} and \eqref{eq:mes_norm_constraints_h} couple the reactive power limits of the MESS units to their active power dispatch with the power factor parameter $K$.

By using the SOCP relaxation of the power flow equations, the overall problem ends up in the form of a Mixed-Integer Second-Order Cone Program (MISOCP). The problem could then be implemented and solved using solvers such as Gurobi, which efficiently handles the discrete integer decision varibales alongside the continuous network constraints.

\subsection{Pre-allocation}
When a warning signal for a potential disturbance—such as a severe weather event is received, the system transitions from normal operation to the pre-allocation stage. In this phase, MESS units move from their arbitrage-based fixed locations at buses to pre-allocated points on the TN. Unlike the deterministic nature of normal operation, this stage is affected by the uncertainty of which scenario is about to happen. The warning signal is modeled to contain a set of potential damage scenarios, $\mathcal{S}$, where each scenario $s \in \mathcal{S}$ is associated with a probability of occurrence, $\pi_s$. In practice, these parameters can be obtained by passing weather forecasts through component fragility curves \cite{7801854}, \cite{9372331}, or by evaluating cyber-intrusion alerts using Bayesian attack networks \cite{ibne2020modeling}. Each scenario is characterized by a subset of grid lines and transportation roads that may be affected by the disturbance.

The optimization problem in this stage is twofold. First, for every possible damage scenario $s$, we find the optimal restoration locations for the MESS units within the grid. This is obtained by solving a load-shedding minimization problem (detailed in Section II.C) to identify the specific bus $b_{s,i}$ where MESS unit $i$ must connect to to maximize the saved load if scenario $s$ occurred. Once the target buses for all scenarios are identified, the objective is to find a single holding location $(x_i, y_i)$ for each MESS unit on the transportation network. These locations are chosen such that they minimize the ETA at the designated target buses, weighted by the probability of each scenario.

\begin{figure}[!t]
    \centering
    \includegraphics[width=\columnwidth]{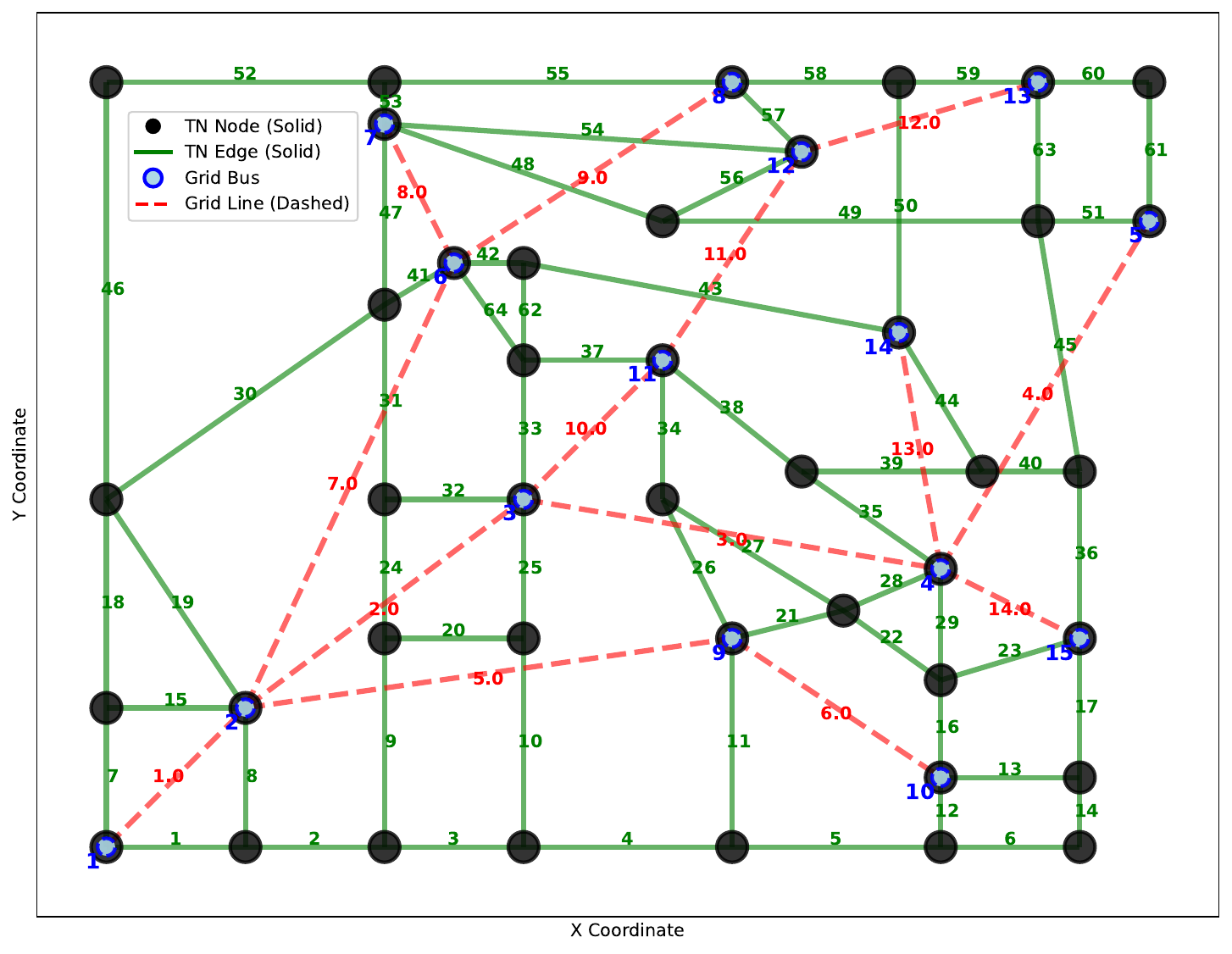}
    \caption{The overlay of the transportation network and the grid. To ensure road accessibility, each bus is assumed to be located at a transportation node.}
    \label{fig:TN_grid_overlay}
\end{figure}

The system restoration plan cannot begin until all of the required units have arrived at the designated buses. Therefore, for a given scenario $s$, the effective response time is governed by the last arriving unit. The objective function for this stage minimizes the expected value of this arrival time:
\begin{equation}
\label{eq:minmax_objective}
\min \sum_{s \in \mathcal{S}} \pi_s \cdot \left( \max_{i \in \mathcal{M}} (t_{i,s}) \right) \quad
\end{equation}
where $t_{i,s}$ denotes the travel time for MESS unit $i$ to reach its designated bus $b_{s,i}$ under scenario $s$.

The transportation network model is simplified as a graph $\mathcal{G} = (\mathcal{V}, \mathcal{E})$, where $\mathcal{V}$ denotes the set of vertices (nodes) and $\mathcal{E}$ represents the set of edges. A depiction of the modeled transportation network and the overlaid grid is illustrated in Fig. \ref{fig:TN_grid_overlay}.

To address the computational intractability of the joint stochastic min-max problem presented in \eqref{eq:minmax_objective}, we propose a sequential heuristic solution. Algorithm 1 decouples the original problem into three computationally solvable sub-problems: scenario evaluation, spatial clustering, and discrete stochastic node search.

\begin{algorithm}[!t]
\DontPrintSemicolon
\SetKwInOut{Input}{Input}
\SetKwInOut{Output}{Output}
\SetKw{KwTo}{to}

\caption{Pre-allocation of MES Units via Clique Clustering and Stochastic Node Search}
\Input{Scenarios $\mathcal{S}$, Probabilities $\{\pi_s\}$, \\
       Transportation Graph $\mathcal{G} = (\mathcal{V},\mathcal{E})$ with \\ shortest path distances $d(u,v)$, \\
       Grid Topology, $K$ MES Units}
\Output{Optimal pre-allocating locations $\mathcal{L}^* = \{v_1^*, \dots, v_K^*\}$}

\textbf{Step 1. Identify Optimal Placements per Scenario} \\
\For{$s \leftarrow 1$ \KwTo $|\mathcal{S}|$}{
    Solve grid optimization (minimize lost load cost): \\
    $\mathcal{B}_s \leftarrow \arg\min_b \sum_{b \in \mathcal{B}} C_b^{\text{VoLL}}(1 - \sigma_b^d)P_b^d$ \\
    \textit{Result:} $\mathcal{B}_s = \{b_{s,1}, \dots, b_{s,K}\}$ is the set of optimal assigned buses for scenario $s$.
}

\textbf{Step 2. Form Clusters via Clique Cost} \\
Partition the scenario sets $\mathcal{B}_1, \dots, \mathcal{B}_{|\mathcal{S}|}$ into $K$ disjoint clusters $C_1, \dots, C_K$. \\
\textbf{Subject to:} Each cluster $C_k$ must contain exactly one node from each scenario set $\mathcal{B}_s$. \\
\textbf{Objective:} Minimize the sum of all pairwise shortest path distances within the clusters: \\
\hspace{2em} $\min \sum_{k=1}^{K} \sum_{\substack{u,v \in C_k \\ u \neq v}} d(u, v)$ \\
\textit{Result:} $K$ optimized clusters, with $b_{s,k}^*$ denoting the node assigned to cluster $C_k$ from scenario $s$. \\

\textbf{Step 3. Stochastic Node Search} \\
\For{$k \leftarrow 1$ \KwTo $K$}{
    Initialize best cost $J^* \leftarrow \infty$ and best location $v_k^* \leftarrow \text{null}$. \\
    \For{$v \in \mathcal{V}$}{
        $J(v) = \sum_{s \in \Omega} \pi_s \cdot d(v, b_{s,k}^*)$ \\
        \If{$J(v) < J^*$}{
            $J^* \leftarrow J(v)$ \\
            $v_k^* \leftarrow v$
        }
    }
}
\textbf{return} $\mathcal{L}^* = \{v_1^*, \dots, v_K^*\}$
\end{algorithm}

The first stage of the decomposition (Step 1) isolates the distribution grid constraints from the transportation model. By solving the load shedding minimization problem for each disturbance scenario $s \in \Omega$ independently, a set of optimal ``target buses'' $\mathcal{B}_s$ for each scenario is generated. With these optimal bus sets, we now aim to cluster buses that are closer to each other, so we can then `solve the problem for each cluster separately. We can, therefore, formulate this problem as a Multidimensional Assignment Problem with decomposable pairwise costs, commonly referred to as the Clique Partitioning problem \cite{grotschel1989cutting}. We can express this in the following form:

\begin{equation}
    \label{eq:clique_objective}
    \min_{x \in \{0, 1\}} \sum_{k=1}^{K} \sum_{s_1<s_2}^{|\mathcal{S}|} \sum_{i=1}^{K} \sum_{j=1}^{K} d(b_{s_1,i}, b_{s_2,j}) \cdot x_{s_1, i, k} \cdot x_{s_2, j, k}
\end{equation}
\begin{equation}
\begin{aligned}
\label{eq:clique_constraint}
\text{s.t.} \quad
\sum_{k=1}^{K} x_{s,i,k} = 1,
\quad \forall s \in \mathcal{S},
\; \forall i \in \{1,\ldots,K\}.
\end{aligned}
\end{equation}

The objective function \eqref{eq:clique_objective} minimizes the total global clique cost, which leads to clustering nodes that share close proximity on the transportation graph. This proximity is measured by real paths on the transportation network, accounting for road closures, rather than Euclidean distances. Here, $d(b_{s_1,i}, b_{s_2,j})$ denotes the shortest path distance between nodes $b_{s_1,i}$ and $b_{s_2,j}$ through the transportation network, which can be calculated using Djkstra's algorithm. The binary variable $x_{s, i, k}$ equals 1 if the $i$-th candidate bus from scenario $s$ is assigned to cluster $k$, and 0 otherwise. Constraint \eqref{eq:clique_constraint} enforces that each candidate node from each scenario should be assigned to exactly one cluster. To formulate this problem as a Mixed-Integer Linear Programming (MILP) problem, we can introduce the additional binary auxiliary variable $y_{s_1, i, s_2, j, k} = x_{s_1,i,k} \cdot x_{s_2,j,k}$ to linearize the objective function \eqref{eq:clique_objective}. For all state pairs where $s_1 < s_2$, and indices $i, j, k \in \{1, \dots, K\}$, the problem becomes: 

\begin{equation}
\label{eq:clique_objective_linear}
\min_{x,y} \sum_{k=1}^K \sum_{s_1 < s_2}^{|\mathcal{S}|} \sum_{i=1}^K \sum_{j=1}^K d(b_{s_1,i}, b_{s_2,j}) y_{s_1,i,s_2,j,k}
\end{equation}
\begin{subequations}
\begin{align}
\label{eq:clique_constraint_linear}
\text{s.t.} \quad & \sum_{k=1}^K x_{s,i,k} = 1, \\
& y_{s_1,i,s_2,j,k} \le x_{s_1,i,k}, \\
& y_{s_1,i,s_2,j,k} \le x_{s_2,j,k}, \\
& y_{s_1,i,s_2,j,k} \ge x_{s_1,i,k} + x_{s_2,j,k} - 1, \\
& x_{s,i,k} \in \{0, 1\}, \\
& y_{s_1,i,s_2,j,k} \in \{0, 1\}.
\end{align}
\end{subequations}

Once the clusters are formed, we need to find the optimal placement of the MESS units on the transportation network for each cluster. The following lemma proves that this problem can be reduced to searching nodes only.

\newtheorem{lemma}{Lemma}

\begin{lemma}
\label{lem:hakimi}
Let $\mathcal{G} = (\mathcal{V}, \mathcal{E})$ be a continuous transportation graph. For any cluster $C_k$ containing target nodes $\{b_{1,k}^*, b_{2,k}^*, \dots, b_{|\mathcal{S}|,k}^*\}$ generated under scenario $s \in \mathcal{S}$ with the corresponding probability $\pi_s \geq 0$, the optimal point $p^* \in \mathcal{G}$ that minimizes the expected shortest distance to all nodes in $C_k$ is a vertex of the graph, $p^* \in \mathcal{V}$.
\end{lemma}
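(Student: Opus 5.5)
This is Hakimi's classical vertex-optimality theorem for the 1-median problem. The plan is to show that the objective $J(p)=\sum_{s\in\mathcal{S}}\pi_s\, d(p,b_{s,k}^*)$ is concave along every edge of $\mathcal{G}$. A concave function on a closed interval attains its minimum at an endpoint, so an optimal point can always be moved to a vertex without increasing the cost.

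\textbf{Parametrizing an edge.} Fix an edge $e=(u,w)\in\mathcal{E}$ with length $\ell_e>0$, viewed as a continuous segment. Parametrize its points as $p(\lambda)$, where $\lambda\in[0,\ell_e]$ is the distance travelled from $u$. Every path from an interior point $p(\lambda)$ to a vertex must leave the edge through $u$ or through $w$. Hence, for any target node $b$,
\[
d(p(\lambda),b)=\min\bigl\{\lambda+d(u,b),\;(\ell_e-\lambda)+d(w,b)\bigr\}.
\]
Both terms inside the minimum are affine in $\lambda$, so $\lambda\mapsto d(p(\lambda),b)$ is concave on $[0,\ell_e]$. The same formula also holds at the endpoints, because $d(u,b)\le \ell_e+d(w,b)$ by the triangle inequality.

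\textbf{Concavity of the objective.} Since $\pi_s\ge 0$, the restriction $J(p(\lambda))=\sum_s \pi_s\, d(p(\lambda),b_{s,k}^*)$ is a nonnegative combination of concave functions. It is therefore concave, and continuous, on the compact interval $[0,\ell_e]$. A concave function on $[0,\ell_e]$ satisfies $J(p(\lambda))\ge \min\{J(u),J(w)\}$ for every $\lambda$, since $J(p(\lambda))$ is bounded below by the corresponding convex combination of the endpoint values.

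\textbf{Reduction to vertices.} Let $p^*$ be any minimizer. A minimizer exists because $\mathcal{G}$ is a finite union of compact segments and $J$ is continuous. If $p^*$ is a vertex we are done. Otherwise $p^*$ lies in the interior of some edge $(u,w)$, and the bound above gives $\min\{J(u),J(w)\}\le J(p^*)$. So an endpoint is also optimal, and an optimal point can be taken in $\mathcal{V}$.

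Under the lemma's wording, "the optimal point is a vertex", one caveat is needed. If $J$ is constant along an edge, interior points are optimal too. The statement should then be read as "some optimal point is a vertex", which is all Step 3 of Algorithm 1 needs, since it searches only over $\mathcal{V}$.

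\textbf{Main obstacle.} The main difficulty is justifying the edge distance formula rigorously in the continuous-graph model. This requires two facts: the shortest-path metric on the continuous graph restricts to $d$ on vertices, and any path from an interior point must exit through an endpoint of its edge. I would state this as the definition of the metric on a continuous graph, in the style of Hakimi. I would also assume edge lengths are positive and that travel time is symmetric, so that $d$ is a metric. If road travel times were direction-dependent, the same argument would still go through edge by edge, using the lengths in the relevant direction.
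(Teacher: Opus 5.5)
Your proposal is correct and follows essentially the same argument as the paper: parametrize each edge by $\lambda$, write the distance to each target as the minimum of two affine functions, use $\pi_s \ge 0$ to get concavity of $J$ along the edge, and conclude that an endpoint attains the minimum. Your additions are sound refinements rather than a different route: the triangle-inequality check at the endpoints, the existence of a minimizer, and reading the conclusion as ``some optimal point is a vertex'' because $J$ can be constant on an edge. Together they tighten the paper's slightly overstated claim that the minimum ``must occur'' at an extreme point.
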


\textit{Proof.} Let $p(\lambda)$ be an arbitrary point located on the edge $(u, v) \in \mathcal{E}$, with $\lambda$ being the distance from $p(\lambda)$ to node $u$, $0 \leq \lambda \leq w_{uv}$, and $w_{uv}$ being the length of this edge. The distance from $p(\lambda)$ to node $v$ is then $w_{uv} - \lambda$.

For any target node $b_{s,k}^* \in C_k$, the shortest path from $p(\lambda)$ to $b_{s,k}^*$ must pass through either $u$ or $v$:
\begin{equation}
    d(p(\lambda), b_{s,k}) = \min \{ \lambda + d(u, b_{s,k}^*), (w_{uv} - \lambda) + d(v, b_{s,k}^*) \}
    \label{eq:edge_distance}
\end{equation}

Equation \eqref{eq:edge_distance} is the minimum of two linear functions with respect to $\lambda$. The point-wise minimum of affine functions is a concave function. Therefore, $d(p(\lambda), b_{s,k}^*)$ is concave on the closed interval $[0, w_{uv}]$.

The objective function is the expected distance from $p(\lambda)$ to all targets in cluster $C_k$:
\begin{equation}
    J(p(\lambda)) = \sum_{s \in \Omega} \pi_s \cdot d(p(\lambda), b_{s,k}^*)
    \label{eq:expected_distance}
\end{equation}

Because $\pi_s \geq 0$, $J(p(\lambda))$ is a non-negative linear combination of concave functions. Therefore, $J(p(\lambda))$ is itself a concave function on the interval $[0, w_{uv}]$.
A fundamental property of a concave function defined on a closed interval is that its global minimum must occur at one of the extreme points of the interval. For the interval $[0, w_{uv}]$, these extreme points are $\lambda = 0$ (associated with node $u$) and $\lambda = w_{uv}$ (associated with node $v$). Generalizing this to all other edges, we conclude that the optimal solution $p^*$ must be in the set of vertices $\mathcal{V}$. \hfill $\blacksquare$

Once $v_i^*$ has been found for each cluster, the set $\mathcal{L}^* = \{v_1^*, \dots, v_K^*\}$ is fully determined.

\subsection{Relocation}
The third stage of the proposed framework addresses the stochastic evolution of the distribution network during the active outage period. Unlike many other works that assume static allocations, this stage operates in a dynamic environment where the DSO must continuously adapt to changes in grid topology and resource availability. The solutions from this stage are also essential for the optimal dispatching in the pre-allocation stage. We categorize the updates that trigger a relocation event into three types: 1) an MESS unit’s State of Charge (SoC) drops below a  threshold or a new MESS is fully charged  and ready to be deployed, 2) a previously damaged distribution line is repaired and returned to service, 3) a new distribution line is disconnected due to expanding fault conditions. The system response is determined by the nature of the update. In the case of charge depletion, the depleted MESS unit is routed to a microgrid containing an active generator, and remains there until fully charged. Updates related to topological changes trigger a re-optimization of the network. Whenever a line status changes (repair or failure), the restoration problem is re-optimized to determine the updated optimal bus locations for all available MESS units. 

The objective function in this stage is as follows.
\begin{equation}
\label{eq:reloc_objective}
\min \sum_{b \in \mathcal{B}} C_b^{\text{VoLL}} (1 - \sigma_b^d) P_b^d
\end{equation}

The focus here is on minimizing load-shedding; therefore, the objective function in \eqref{eq:reloc_objective} is designed to minimize the cost of the lost load over the entire buses. For each bus, the priority of its load is modeled in the coefficient $C_b^{VoLL}$. The term $(1 - \sigma_b^d)$ determines if the load at bus $b$, $P_b^d$, is served or shed using the binary decision variable, $\sigma_b^d$. Other decision variables for this stage are the open/closed status of distribution lines, $\sigma_l^l$, to form islanded microgrids, and the placement of MESS units, $u_{kb}$.

The distribution system constraints for this stage are:
\begin{subequations}
\begin{flalign}
& \left| v_{s(l)}^2 - v_{r(l)}^2 - 2(R_l f_{l}^p + X_l f_{l}^q) + a_{l}^2(R_l^2 + X_l^2) \right| \leq \nonumber & \\
& M \cdot (1 - \sigma_{l}^l), \quad \forall l \in \mathcal{L} & \label{eq:reloc_constraints_a} \\
& (f_{l}^{p})^{2} + (f_{l}^{q})^{2} \leq v_{s(l)}^2{a_{l}^2}, \quad \forall l \in \mathcal{L} & \label{eq:reloc_constraints_b} \\
& (f^{\mathrm{p}}_{l})^2 + (f^{\mathrm{q}}_{l})^2 \leq (\sigma^{\mathrm{l}}_{l}S_l)^2, \quad \forall l \in \mathcal{L} & \label{eq:reloc_constraints_c} \\
& (f^{\mathrm{p}}_{l} - a_{l}^2 \cdot R_l)^2 + (f^{\mathrm{q}}_{l} - a_{l}^2 \cdot X_l)^2 \leq (\sigma^{\mathrm{l}}_{l} \cdot S_l)^2, \quad \forall l \in \mathcal{L} & \label{eq:reloc_constraints_d} \\
& - \sum_{l|s(l)=0} (f_{l}^{p} - a_{l}^2 \cdot R_{l}) - p_{0}^{g} + G_{0} \cdot v_{0}^2 = 0 & \label{eq:reloc_constraints_e} \\
& - \sum_{l|s(l)=0} (f_{l}^{q} - a_{l}^2 \cdot X_{l}) - q_{0}^{g} - B_{0} \cdot v_{0}^2 = 0 & \label{eq:reloc_constraints_f} \\
& f^{\mathrm{p}}_{b} - \sum_{l|r(l)=b} (f^{\mathrm{p}}_{l} - a_{l}^2 \cdot R_l) - p^{\mathrm{g}}_{b} + \sigma^{\mathrm{d}}_{b} \cdot P^{\mathrm{d}}_{b} + G_b \cdot v_{b}^2 \nonumber & \\
& \quad - \sum_{k \in \mathcal{K}} p^{\mathrm{dis}}_{kb}  = 0, \quad \forall b \in \mathcal{B} & \label{eq:reloc_constraints_g} \\  
& f^{\mathrm{q}}_{b} - \sum_{l|r(l)=b} (f^{\mathrm{q}}_{l} - a_{l}^2 \cdot X_l) - q^{\mathrm{g}}_{b} + \sigma^{\mathrm{d}}_{b} \cdot Q^{\mathrm{d}}_{b} - B_b \cdot v_{b}^2 \nonumber & \\
& \quad - \sum_{k \in \mathcal{K}} q^{\mathrm{dis}}_{kb} = 0, \quad \forall b \in \mathcal{B} & \label{eq:reloc_constraints_h} \\
& \underline{P}_{b}^{g} \leq p_{b}^{g} \leq \overline{P}_{b}^{g}, \quad \forall b \in \mathcal{B}^{G} & \label{eq:reloc_constraints_i} \\
& \underline{Q}_{b}^{g} \leq q_{b}^{g} \leq \overline{Q}_{b}^{g}, \quad \forall b \in \mathcal{B}^{G} & \label{eq:reloc_constraints_j} \\
& \underline{V}_{b} \leq v_{b} \leq \overline{V}_{b}, \quad \forall b \in \mathcal{B} & \label{eq:reloc_constraints_k}
\end{flalign}
\end{subequations}

Similar to the previous stage, these constraints model the power flow via SOCP relaxation. However, binary decision variables are introduced here to control topology changes and load management. Additionally, the formulation in this stage relies on a time-independent, single-period formulation that calculates the optimal response for the current network conditions once, and the variables are recalculated only upon receiving an update. The Big-M formulation in \eqref{eq:reloc_constraints_a} enforces the physical voltage drop along a line when it is closed ($\sigma_l^l = 1$), and decouples the sending and receiving node voltages when the ine is open ($\sigma_l^l = 0$). Using the load shedding variable, $\sigma_b^d$, nodal balance is enforced in \eqref{eq:reloc_constraints_g} and \eqref{eq:reloc_constraints_h} for when the load is connected or shed.

The MESS constraints for this stage are also modeled as follows:
\begin{subequations}
\begin{flalign}
& 0 \leq p^{\mathrm{dis}}_{kb} \leq \bar{P}_k \cdot u_{kb}, \quad \forall k \in \mathcal{K}, b \in \mathcal{B} & \label{eq:mes_reloc_constraints_a} \\
& \sum_{b \in \mathcal{B}} u_{kb} = 1, \quad \forall k \in \mathcal{K} & \label{eq:mes_reloc_constraints_b} \\
& -K \cdot p_{kb}^{\text{dis}} \leq q_{kb}^{\text{dis}} \leq K \cdot p_{kb}^{\text{dis}}, \quad \forall b \in \mathcal{B} & \label{eq:mes_reloc_constraints_c}
\end{flalign}
\end{subequations}
Unlike the normal operation stage, the MESS units can only operate in the discharging mode in this stage. Additionally, the time-independent nature of this formulation allows for the removal of all temporal coupling constraints.

The problem formulation for this stage remains as an MISOCP, and can be solved via conventional solvers. However, the extensive solution times of traditional solvers make them hard to employ for such use cases, as real-time updates need to be responded to timely, and for obtaining the optimal solutions for the pre-allocation stage, we need to process many scenarios in a reasonably quick time. For this purpose, deep learning-based solutions show great potential, which will be discussed in the next section.

\section{DL-based Solution}
\label{section: Deep Learning-based Solution}

Conventional branch-and-cut solvers require excessive computation time to solve the load-shedding minimization problem. This difficulty becomes even more severe in the pre-allocation stage, where numerous problems of this kind must be solved in a short time period to find the optimal placements. As a result, adopting more computationally efficient approaches becomes necessary. 

\subsection{Graph Neural Networks}
\label{subsec:gnn}

A power network can be represented as a graph
\(\mathcal{G}=(\mathcal{V},\mathcal{E})\), where
\(\mathcal{V}=\{1,\dots,N\}\) denotes the set of buses and
\(\mathcal{E}\subseteq \mathcal{V}\times\mathcal{V}\) denotes the set of transmission lines.
Unlike Euclidean data, graph-structured data are defined on an irregular domain whose dependencies are governed by the network topology. Graph neural
networks (GNNs) are designed to learn on such irregular domains by iteratively propagating information along the graph \cite{battaglia2018relational}. Since the present problem involves edge-related disturbances and requires both node-level and
edge-level decisions, an edge-aware message-passing neural
network (MPNN) is adopted \cite{pmlr-v70-gilmer17a}.

For each node \(i\in\mathcal{V}\), let
\(\mathbf{x}_i \in \mathbb{R}^{d_v}\) denote its input feature vector, and for each edge
\((i,j)\in\mathcal{E}\), let
\(\mathbf{e}_{ij}\in\mathbb{R}^{d_e}\) denote its edge feature vector.
In an MPNN, each node \(i\) is associated with a hidden state
\(\mathbf{h}_i^{(\ell)} \in \mathbb{R}^{d_h}\), and each edge \((i,j)\) is associated with an edge state
\(\mathbf{z}_{ij}^{(\ell)} \in \mathbb{R}^{d_z}\) at layer \(\ell\).
The initial embeddings are obtained as:
\begin{equation}
\mathbf{h}_i^{(0)} = \phi_v^{\mathrm{in}}(\mathbf{x}_i), \qquad
\mathbf{z}_{ij}^{(0)} = \phi_e^{\mathrm{in}}(\mathbf{e}_{ij}),
\label{eq:init_embed}
\end{equation}
where \(\phi_v^{\mathrm{in}}(\cdot)\) and \(\phi_e^{\mathrm{in}}(\cdot)\) are learnable input encoders.

Then, for \(\ell=0,1,\dots,L-1\), information is propagated according to four operations: message construction, neighborhood aggregation, node update, and edge update.

First, for each ordered pair \((j,i)\) such that \(j\in\mathcal{N}(i)\), where \(\mathcal{N}(i)\) is the neighbor set of node \(i\), a message is computed as:
\begin{equation}
\mathbf{m}_{j\rightarrow i}^{(\ell)}
=
\phi_m^{(\ell)}
\!\left(
\mathbf{h}_i^{(\ell)},
\mathbf{h}_j^{(\ell)},
\mathbf{z}_{ij}^{(\ell)}
\right),
\label{eq:message_construction}
\end{equation}
where
\(\phi_m^{(\ell)}(\cdot)\) is the message function.
The incoming messages are then aggregated through a permutation-invariant operator:
\begin{equation}
\bar{\mathbf{m}}_i^{(\ell)}
=
\bigoplus_{j\in\mathcal{N}(i)}
\mathbf{m}_{j\rightarrow i}^{(\ell)},
\label{eq:message_aggregation}
\end{equation}
where \(\bigoplus\) is typically chosen as summation.
This step combines the information received from all neighboring nodes. The node embedding is then updated according to:
\begin{equation}
\mathbf{h}_i^{(\ell+1)}
=
\phi_h^{(\ell)}
\!\left(
\mathbf{h}_i^{(\ell)},
\bar{\mathbf{m}}_i^{(\ell)}
\right),
\qquad i\in\mathcal{V},
\label{eq:node_latent_update}
\end{equation}
where \(\phi_h^{(\ell)}(\cdot)\) is the node update function.
Hence, the new representation of node \(i\) depends on both its previous state and the aggregated information received from its neighbors.

Finally, the state of each edge is updated as:
\begin{equation}
\mathbf{z}_{ij}^{(\ell+1)}
=
\phi_z^{(\ell)}
\!\left(
\mathbf{h}_i^{(\ell)},
\mathbf{h}_j^{(\ell)},
\mathbf{z}_{ij}^{(\ell)}
\right),
\qquad (i,j)\in\mathcal{E},
\label{eq:edge_latent_update}
\end{equation}
where \(\phi_z^{(\ell)}(\cdot)\) is the edge update function. The new edge representation \(\mathbf{z}_{ij}^{(L)}\) captures both local line attributes and the states of the surrounding buses.

The final states are then passed to task-specific output heads.
For a node-level task \(r\), the prediction at bus \(i\) is written as:
\begin{equation}
\hat{y}_i^{(r)}
=
\rho_v^{(r)}\!\left(\mathbf{h}_i^{(L)}\right),
\qquad i\in\mathcal{V},
\label{eq:node_decoder}
\end{equation}
where \(\rho_v^{(r)}(\cdot)\) is a learnable node decoder.
Similarly, for an edge-level task \(q\), the prediction on line \((i,j)\) is:
\begin{equation}
\hat{y}_{ij}^{(q)}
=
\rho_e^{(q)}
\!\left(
\left[
\mathbf{h}_i^{(L)}
\;\|\;
\mathbf{h}_j^{(L)}
\;\|\;
\mathbf{z}_{ij}^{(L)}
\right]
\right),
\qquad (i,j)\in\mathcal{E},
\label{eq:edge_decoder}
\end{equation}
where \(\rho_e^{(q)}(\cdot)\) is a learnable edge decoder and \(\|\) denotes concatenation.
For binary decision variables, the decoder output is typically followed by a sigmoid activation to produce a Bernoulli probability.

The above MPNN formulation is particularly suitable for the problem presented in this work.
Unlike node-only architectures, it includes line states in the information flow through the network and produces embeddings for both bus-level and line-level decisions.

\subsection{GNN-Based Solution to the Relocation Problem}
\label{subsec:gnn_relocation}

In the load-shedding minimization problem, the objective is to learn a mapping from the operating conditions of the grid to the optimal MESS placements and other control decisions.
Let
\(\Xi\) denote the collection of all static physical parameters of the grid, such as line impedance, line capacity, and bus admittance quantities.
Since both the topology and the physical parameters are fixed, the proposed network is trained for a specific grid and learns a grid-dependent decision rule.
However, the operating conditions may vary from sample to sample through the input features.

For a given input sample, the features consist of:
(i) the set of damaged lines,
(ii) the values of lost load at buses,
(iii) the real power demands, and
(iv) the reactive power demands at buses.
Accordingly, the node feature vector of bus \(b\in\mathcal{V}\) is defined as:
\begin{equation}
\mathbf{x}_b
=
\left[
C_b^{\mathrm{VoLL}},
P_b^{d},
Q_b^{d}
\right]^\top
\in\mathbb{R}^{3},
\label{eq:reloc_node_feature}
\end{equation}
while the edge feature of line \(l\in\mathcal{E}\) is defined by the binary damage indicator:
\begin{equation}
\mathbf{e}_l = [\delta_l] \in \mathbb{R}, \qquad
\delta_l =
\begin{cases}
1, & \text{if line } l \text{ is damaged},\\
0, & \text{otherwise}.
\end{cases}
\label{eq:reloc_edge_feature}
\end{equation}
Therefore, the learned model can be written as:
\begin{equation}
f_{\Theta}:
\left(
\mathcal{G},
\Xi,
\{\mathbf{x}_b\}_{b\in\mathcal{V}},
\{\mathbf{e}_l\}_{l\in\mathcal{E}}
\right)
\mapsto
\left(
\{u_{kb}\},
\{\sigma_l^{l}\},
\{\sigma_b^{d}\}
\right),
\label{eq:reloc_mapping}
\end{equation}
where \(\Theta\) denotes the trainable parameters of the network.

Using the MPNN model introduced in the previous subsection, after \(L\) message-passing layers, we obtain
\begin{equation}
\left\{
\mathbf{h}_b^{(L)}, \mathbf{z}_l^{(L)}
\right\}
=
\Phi_{\Theta_{\mathrm{enc}}}
\!\left(
\mathcal{G},
\Xi,
\{\mathbf{x}_b\}_{b\in\mathcal{V}},
\{\mathbf{e}_l\}_{l\in\mathcal{E}}
\right),
\label{eq:reloc_encoder}
\end{equation}
where
\(\mathbf{h}_b^{(L)} \in \mathbb{R}^{d_h}\) is the final state of bus \(b\),
\(\mathbf{z}_l^{(L)} \in \mathbb{R}^{d_z}\) is the final state of line \(l\), and
\(\Theta_{\mathrm{enc}}\) is the parameter set of the graph encoder.
These embeddings collect both the local attributes of each bus/line and the nonlocal effect of the damaged-line configuration propagated through the network topology.

The final layer of the model consists of three task-specific output heads corresponding to MESS locations, line switching, and load shedding.
For the location decision, a score is computed for each MESS--bus pair \((k,b)\).
Let \(\mathbf{g}_k \in \mathbb{R}^{d_u}\) denote a learnable embedding associated with MESS unit \(k\).
The relocation score is then defined as:
\begin{equation}
s_{kb}^{u}
=
\rho_u
\!\left(
\left[
\mathbf{g}_k \,\|\, \mathbf{h}_b^{(L)}
\right]
\right),
\label{eq:reloc_score}
\end{equation}
where \(\rho_u(\cdot)\) is a learnable decoder.
A bus-assignment probability for MESS unit \(k\) is obtained through a softmax operation over all buses:
\begin{equation}
p_{kb}^{u}
=
\frac{\exp(s_{kb}^{u})}
{\sum_{b' \in \mathcal{V}} \exp(s_{kb'}^{u})}.
\label{eq:reloc_softmax}
\end{equation}
The binary relocation decision is then obtained as:
\begin{equation}
\hat{u}_{kb}
=
\mathbb{I}
\!\left(
b = \arg\max_{b' \in \mathcal{V}} p_{kb'}^{u}
\right),
\label{eq:reloc_output}
\end{equation}
where \(\mathbb{I}(\cdot)\) is the indicator function.

For the line switching decision, an edge-level decoder is applied to the final latent representation of each line together with the latent states of its incident buses.
Using \(s(l)\) and \(r(l)\) to denote the sending-end and receiving-end buses of line \(l\), respectively, define:
\begin{equation}
s_{l}^{\sigma}
=
\rho_{\sigma}
\!\left(
\left[
\mathbf{h}_{s(l)}^{(L)}
\,\|\,
\mathbf{h}_{r(l)}^{(L)}
\,\|\,
\mathbf{z}_{l}^{(L)}
\right]
\right).
\label{eq:switch_score}
\end{equation}

The predicted switching status of line \(l\) is then:
\begin{equation}
\hat{\sigma}_{l}^{l}
=
\mathbb{I}(\mathrm{sigmoid}(s_{l}^{\sigma})>\tau_{\sigma}),
\label{eq:switch_output}
\end{equation}
where \(\tau_{\sigma}\in(0,1)\) is a decision threshold.

Similarly, for the load shedding decision, a node-level decoder is used:
\begin{equation}
s_{b}^{d}
=
\rho_{d}
\!\left(
\mathbf{h}_{b}^{(L)}
\right),
\label{eq:shed_score}
\end{equation}

\begin{equation}
\hat{\sigma}_{b}^{d}
=
\mathbb{I}(\mathrm{sigmoid}(s_{b}^{d})>\tau_{d}),
\label{eq:shed_output}
\end{equation}
where \(\tau_d\in(0,1)\) is the threshold associated with the load-shedding classifier.

The network is trained in a supervised manner using optimal solutions generated offline from the conventional optimization solver.
Let
\(\mathcal{D}=\{(\mathcal{X}^{(n)},\mathcal{Y}^{(n)})\}_{n=1}^{N_{\mathrm{tr}}}\)
denote the training set, where \(\mathcal{X}^{(n)}\) contains the input features of sample \(n\), and
\begin{equation}
\mathcal{Y}^{(n)}
=
\left(
\{u_{kb}^{\star,(n)}\},
\{\sigma_{l}^{l,\star,(n)}\},
\{\sigma_{b}^{d,\star,(n)}\}
\right)
\label{eq:training_label}
\end{equation}
contains the corresponding optimal placements, switching, and shedding decisions.
The trainable parameters are obtained by minimizing a multi-task loss:
\begin{equation}
\mathcal{L}(\Theta)
=
\lambda_u \mathcal{L}_u
+
\lambda_{\sigma} \mathcal{L}_{\sigma}
+
\lambda_d \mathcal{L}_d,
\label{eq:reloc_total_loss}
\end{equation}
where \(\lambda_u\), \(\lambda_{\sigma}\), and \(\lambda_d\) are nonnegative weighting coefficients.
A natural choice is to define:
\begin{equation}
\mathcal{L}_u
=
-\sum_{n=1}^{N_{\mathrm{tr}}}
\sum_{k}
\sum_{b}
u_{kb}^{\star,(n)}
\log p_{kb}^{u,(n)},
\label{eq:loss_u}
\end{equation}

and

\begin{equation}
\begin{aligned}
\mathcal{L}_{\sigma}
= - \sum_{n=1}^{N_{\mathrm{tr}}} \sum_{l}
\Big[
& \sigma_{l}^{\,l,\star,(n)} \log p_{l}^{\sigma,(n)} \\
&+ \big(1-\sigma_{l}^{\,l,\star,(n)}\big)
\log\!\big(1-p_{l}^{\sigma,(n)}\big)
\Big].
\end{aligned}
\label{eq:loss_sigma}
\end{equation}

$\mathcal{L}_{d}$ can be defined similarly to $\mathcal{L}_{\sigma}$.
It is worth noting that the proposed model does not aim to learn a universal policy across arbitrary network topologies.
Rather, because \(\mathcal{G}\) and \(\Xi\) are fixed, the trained MPNN acts as a high-speed decision model for a given distribution system.
Within that fixed system, however, it can accommodate varying conditions through changes in the damaged-line set and bus-level operating features.

Once the network has been trained for the load-shedding minimization problem, it can be employed in the pre-allocation stage. We can replace the optimization problem solved in Step 2 of Algorithm 1 with a forward pass of the trained GNN. This substitution is particularly valuable in that stage, where a large number of problems corresponding to each scenario should be solved in a short time to determine the optimal pre-allocation of MESS units.

\section{Simulation Results}
\label{section: Simulation Results}

To validate the performance of the proposed three-stage framework, comprehensive case studies are conducted. The proposed methodology is implemented on an IEEE 15-bus radial distribution test system. The base voltage and apparent power of the system are set at $12.66$ kV and $1$ MVA, respectively, establishing a base impedance of $Z_{base} = 160.28 \ \Omega$. Voltage limits are constrained between $\underline{V} = 0.95$ p.u. and $\overline{V} = 1.05$ p.u. during normal operation, while the lower bound is relaxed to $0.85$ p.u. during emergency islanding configurations. The MESS units integrated into the coupled electrical and transportation network feature a maximum energy capacity of $E^{max} = \text{1.0}$ MWh, an active power rating of $P^{max} = \text{0.5}$ MW, and operate with a charging/discharging efficiency of $\eta = \text{0.95}$. For routing over the TN, an average travel speed of $v^{tr} = \text{40}$ km/h is assumed. Furthermore, to accurately capture the selective load shedding behavior in the restoration stage, the VoLL is heterogeneously assigned across the network, prioritizing critical loads over the others.

\subsection{Normal Operation Simulations}

The simulations for this stage focus on demonstrating the effectiveness of MESS utilization in energy arbitrage. Table \ref{tab:mes_cost} shows the total generation cost for the three selected bus systems under four cases of no MESS, one MESS, two MESSs, and three MESSs. As expected, increasing the number of MESS units reduces the total generation cost, with the most significant reduction observed between the cases of no MESSs and a single MESS unit. As the number of utilized MESSs increases, this gap reduces. The same results hold for different bus systems with almost proportional gaps.

The mechanism underlying the cost reduction is illustrated in detail in Fig. \ref{fig:arbitrage24}. This figure illustrates the temporal variation of generated electricity alongside the corresponding electricity price for the IEEE 15-bus system. As observed, during periods of higher electricity prices, generators produce less electricity due to increased reliance on stored energy in MESSs. Conversely, during periods of lower generation costs, electricity production increases, partly to supply energy for storage in MESSs, which is later utilized when prices are higher.

\begin{table}[!t]
\caption{Total Generation Cost under Various Bus System Cases and Various Number of MES units (\$)}
\label{tab:mes_cost}
\centering
\renewcommand{\arraystretch}{1.1}
\begin{tabular}{lcccc}
\hline
\textbf{System} & \textbf{No MES} & \textbf{1 MES} & \textbf{2 MES} & \textbf{3 MES} \\
\hline
IEEE 15-bus  & 1,211 & 1,057 & 968 & 948 \\
IEEE 33-bus  & 3,730 & 3,412 & 3,255 & 3,167 \\
IEEE 69-bus  & 6,101 & 5,743 & 5,414 & 5,236 \\
\hline
\end{tabular}
\end{table}

\begin{figure}[!t]
    \centering
    \includegraphics[width=\columnwidth]{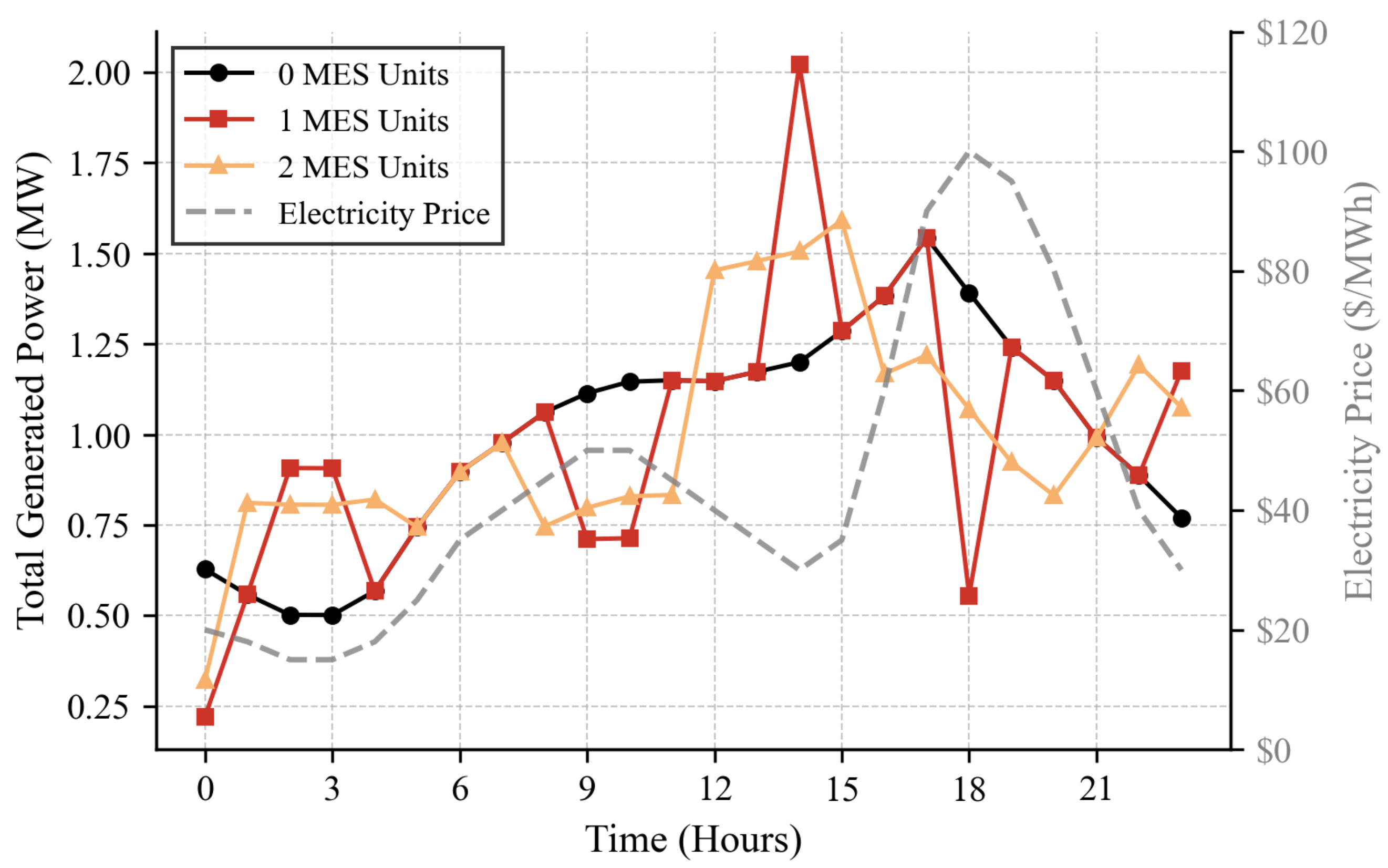}
    \caption{Detailed arbitrage mechanism in a 24-hour Interval.}
    \label{fig:arbitrage24}
\end{figure}

\subsection{Pre-allocation Simulations}

To demonstrate the steps of the proposed algorithm, several scenarios are considered in Table \ref{tab:scenarios}. Each scenario includes the set of failed lines, the assigned MESS locations at buses for that scenario, and the corresponding probability of occurrence. The set of assigned locations are obtained by solving the load-shedding minimization problem formulated in stage III. Figure 4.a shows these assigned bus locations for each scenario. These locations are clustered according to the step II of algorithm I, and the resulting clusters are shown in figure 4.b. It is worth noting that the number of clusters ends up to be equal to the number of MESS units. That is because the number of assigned bus locations for each scenario is equal to the number of MESS units since that is the outcome of solving the load-shedding problem for that scenario. Due to the fact that each cluster contains exactly one bus location per scenario, and that clusters are disjoint, the number of clusters will be equal to the number of MESS units. Fig 4.c shows the nodes assigned on the TN for each cluster according to step 3 of algorithm 1. These selected nodes are those that have the shortest expected path to the assigned bus locations of their clusters.

Table~\ref{tab:prealoc_comparison} compares the ETA and Cost of Lost Load (CoLL) for the cases without pre-allocation and with pre-allocation during the lost period for the IEEE 15-bus and IEEE 33-bus systems. In terms of ETA, pre-allocation reduces this metric by roughly 33\% compared to the case without pre-allocation. As for CoLL, the results indicate that pre-allocation can reduce this by approximately the same ratio as for ETA. Similar trends can be observed for the IEEE 33-bus system.

\begin{table}[!t]
\caption{Failed lines in different scenarios and their corresponding assigned buses}
\label{tab:scenarios}
\centering
\renewcommand{\arraystretch}{1.1}
\begin{tabular}{lcccc}
\hline
\textbf{Scenario} & \textbf{Set of failed lines} & \textbf{set of assigned buses} & \textbf{Probability} \\
\hline
Scenario 1  & \{4,14\} & \{5,15\} & 0.25 \\
Scenario 2  & \{11,3\} & \{4,13\} & 0.25 \\
Scenario 3  & \{11,6\} & \{10,36\} & 0.25 \\
Scenario 4  & \{9,5\} & \{14,39\} & 0.25 \\
\hline
\end{tabular}
\end{table}

\begin{figure}[!t]
    \centering
    \includegraphics[
        width=\columnwidth,
        height=0.78\textheight,
        keepaspectratio
    ]{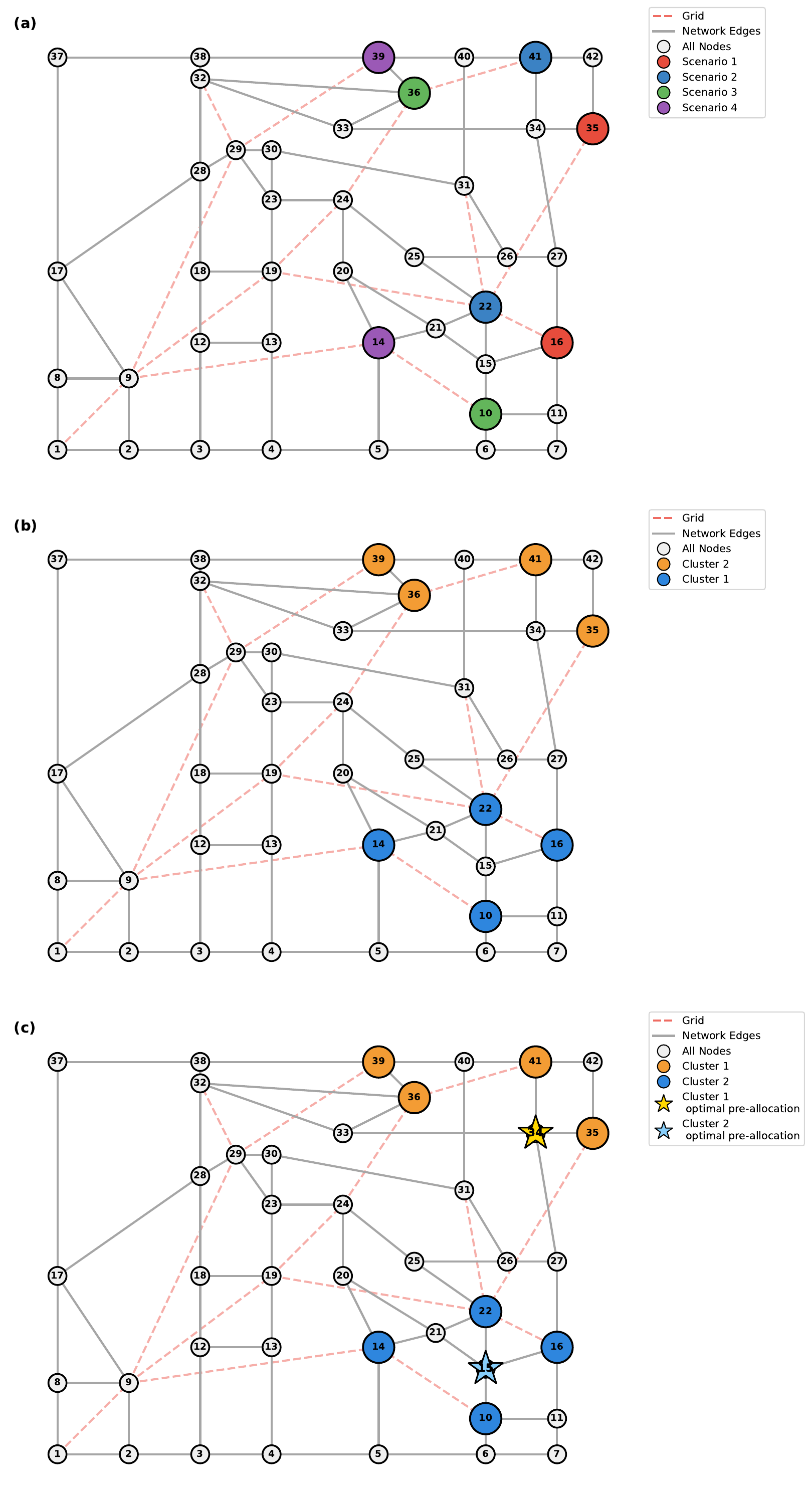}
    \caption{Three steps of Algorithm~1: optimal allocation for each scenario, cluster formation based on clique cost, and stochastic node search.}
    \label{fig:three_plots}
\end{figure}

\begin{table*}[!t]
\caption{Comparison of restoration time and CoLL with and without pre-allocation for different numbers of MES units}
\label{tab:prealoc_comparison}
\centering
\renewcommand{\arraystretch}{1.15}
\setlength{\tabcolsep}{6pt}
\begin{tabular}{llccccc}
\hline
\textbf{System} & \textbf{Metric} & \textbf{0 MES} & \textbf{1 MES} & \textbf{2 MES} & \textbf{3 MES} & \textbf{No Pre-Allocation} \\
\hline
\multirow{2}{*}{IEEE 15} 
& ETA (s) & 2.1 & 1.5 & 1.3 & 1.2 & 2.3 \\
& CoLL (\$)       & 340 & 300 & 280 & 270 & 350 \\
\hline
\multirow{2}{*}{IEEE 33} 
& ETA (s) & 1.9 & 1.6 & 1.55 & 1.5 & 2.2 \\
& CoLL (\$)       & 410 & 370 & 352 & 349 & 430 \\
\hline
\end{tabular}
\end{table*}

\subsection{Relocation Simulations}

The simulations in this section are designed to examine the evolution of system conditions under different update types and the associated control responses to them, including load shedding and MESS relocation decisions. Table~\ref{tab:reloc_results} presents a comprehensive sequence of possible updates in a 12-hour outage interval considering three available MESS units. Fig.~\ref{fig:15bus} illustrates the bus and line indexing used in Table~\ref{tab:reloc_results}. Node 1 is assumed to be connected to the main grid. The system starts with an initial set of damaged lines of \{6,14,3,10\}, and a set of optimal assigned bus locations for MESSs. As the outage progresses, line 4 is added to the set of damaged lines, and MESS locations are updated according to that. Once an MESS unit is depleted, indicated by a $0\%$ energy level at the end of the interval in Table~\ref{tab:reloc_results}, it is moved to the nearest node in the microgrid that is connected to the main grid, while the remaining MESS units are relocated to compensate for its absence. As the outage further evolves, recharged MESS units return to service and some damaged lines are restored until the system reaches the “Outage Resolved” state. Table~\ref{tab:CoLL} illustrates the aggregated VoLL rates and the total CoLL for each of these intervals. The results indicate that the cost is reduced when additional MESSs are available for system support or when the number of damaged lines is lower.

Figure~\ref{fig:voll_evolution} compares the aggregated VoLL rates with and without MESS relocation. Additionally, the impact of number of available MESS units is examined. Figure~\ref{fig:voll_evolution} shows that, as additional lines become damaged, the MESS locations remain fixed in the absence of relocation, resulting in higher aggregated VoLL rates. In contrast, proper relocation allows MESSs to adapt to changes and reduces this rate to nearly zero.

Table~\ref{tab:gnn_accuracy} summarizes the prediction performance of the proposed GNN model for MESS placement, load shedding, and line-status decisions for the IEEE 15, 33, and 69-bus systems. The results show high accuracy for all three categories, which demonstrates its capability to predict the optimal solution.

\begin{table*}[!t]
\caption{Prediction Accuracy of the GNN Model for MESS Placement, Load Shedding, and Line Status on the Training and Validation Sets}
\label{tab:gnn_accuracy}
\centering
\footnotesize
\renewcommand{\arraystretch}{1.15}
\setlength{\tabcolsep}{10pt}

\begin{tabular}{lcccccc}
\hline
\multirow{2}{*}{\textbf{System}}
& \multicolumn{2}{c}{\textbf{MESS Location Accuracy}}
& \multicolumn{2}{c}{\textbf{Load Shedding Accuracy}}
& \multicolumn{2}{c}{\textbf{Line Status Accuracy}} \\
\cline{2-3}
\cline{4-5}
\cline{6-7}
& \textbf{Training}
& \textbf{Validation}
& \textbf{Training}
& \textbf{Validation}
& \textbf{Training}
& \textbf{Validation} \\
\hline
IEEE 15-bus
& 86.2\%
& 82.2\%
& 99.1\%
& 97.2\%
& 99.4\%
& 98.6\% \\

IEEE 33-bus
& 91.1\%
& 85.6\%
& 99.5\%
& 98.1\%
& 99.6\%
& 98.1\% \\

IEEE 69-bus
& 91.7\%
& 86.0\%
& 99.6\%
& 98.3\%
& 99.6\%
& 98.7\% \\
\hline
\end{tabular}

\end{table*}

\begin{figure}[htbp]
    \centering
    \includegraphics[width=\columnwidth]{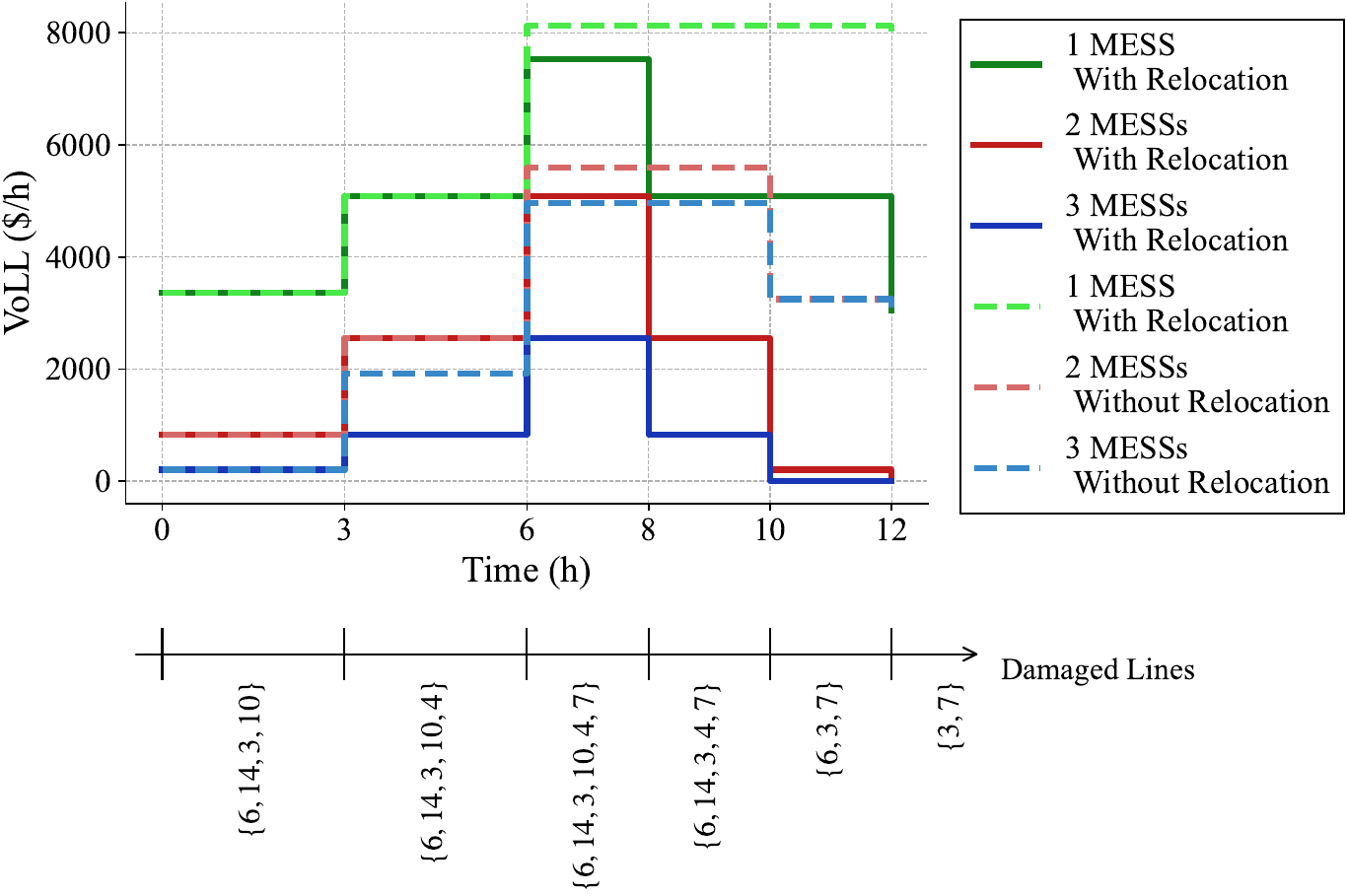}
    \caption{Comparison of the aggregated VoLL rates with and without MESS relocation during a 12-hour outage interval.}
    \label{fig:voll_evolution}
\end{figure}

\begin{table*}[t]
\centering
\caption{Grid Conditions Updates During a 12-hour Outage Interval, and Corresponding System Responses}
\label{tab:reloc_results}
\begin{tabular}{c | c | c | c | c | c | c | c | c | c | c | c}
\hline\hline
\multirow{3}{*}{Time Interval} & \multirow{3}{*}{Damaged Lines} & \multirow{3}{*}{Update Type} & \multicolumn{4}{c|}{Control Decisions} & \multicolumn{3}{c|}{\multirow{2}{*}{Energy Level}} & \multirow{3}{*}{\begin{tabular}[c]{@{}c@{}}VoLL Rate\\ (\$/h) \end{tabular}} & \multirow{3}{*}{\begin{tabular}[c]{@{}c@{}}Total CoLL \\ (\$) \end{tabular}}\\
\cline{4-7}
 & & & \multirow{2}{*}{$\sigma_d$} & \multicolumn{3}{c|}{MES Locations} & \multicolumn{3}{c|}{} & & \\
\cline{5-10}
 & & &  & $M_1$ & $M_2$ & $M_3$ & $M_1$ & $M_2$ & $M_3$ & & \\
\hline
0.0 - 4.0 & \{6,14,3,10\} & New Fault & \{10\}   & 11   & 4 & 15 & 48\%$\downarrow$ & 33\%$\downarrow$ & 63\%$\downarrow$ & 204.0 & 816.0\\
4.0 - 6.1 & \{6,14,3,10,4\} & MES Depleted & \{10,15\} & 11 & 4  & 5 & 26\%$\downarrow$ & 0\%$\downarrow$ & 35\%$\downarrow$ & 828.0 & 1,738.8\\
6.1 - 8.4 & \{6,14,3,10,4\} & MES Depleted & \{10,15,5\} & 11 & 9  & 4 & 0\%$\downarrow$ & 92\%$\uparrow$ & 8\%$\downarrow$ & 2,548.0 & 5,860.4\\
8.4 - 8.6 & \{6,14,3,10,4\} & MES Charged & \{10,15,5,4,14\} & 3 & 9  & 11 & 7\%$\uparrow$ & 100\%$\uparrow$ & 2\%$\downarrow$ & 5,085.0 & 1,017.0\\
8.6 - 8.7 & \{6,14,3,10,4\} & MES Depleted & \{10,15,5\} & 3 & 4  & 11 & 10\%$\uparrow$ & 98\%$\downarrow$ & 0\%$\downarrow$ & 2,548.0 & 254.8\\
8.7 - 10.0 & \{6,14,3,10,4\} & Line Restored & \{10,15,5,4,14\} & 3 & 11  & 6 & 61\%$\uparrow$ & 77\%$\downarrow$ & 52\%$\uparrow$ & 5,085.0 & 6,610.5\\
10.0 - 11.1 & \{6,14,10,4\} & MES Charged & \{10,5,15\} & 3 & 11  & 6 & 100\%$\uparrow$ & 68\%$\downarrow$ & 89\%$\uparrow$ & 2,548.0 & 2,802.8\\
11.1 - 11.4 & \{6,14,10,4\} & MES Charged & \{10,15\} & 5 & 11  & 6 & 95\%$\downarrow$ & 65\%$\downarrow$ & 100\%$\uparrow$ & 828.0 & 248.4\\
11.4 - 12 & \{6,14,10,4\} & Outage Resolved & \{10\} & 5 & 11  & 15 & 85\%$\downarrow$ & 59\%$\downarrow$ & 95\%$\downarrow$ & 204.0 & 122.4\\
\hline\hline
\end{tabular}
\end{table*}

\begin{table}[!t]
\caption{Aggregated Value of Lost Load (VoLL) Rates and 
Corresponding Cost of Lost Load (CoLL) Over Different Time Intervals}
\label{tab:CoLL}
\centering
\begin{tabular}{ccc}
\hline
\textbf{Time Interval (h)} & \textbf{VoLL Rate (\$/h)} & \textbf{CoLL (\$)} \\
\hline
0.0--4.0   & 204.0   & 816.0   \\
4.0--6.1   & 828.0   & 1,738.8 \\
6.1--8.4   & 2,548.0 & 5,860.4 \\
8.4--8.6   & 5,085.0 & 1,017.0 \\
8.6--8.7   & 2,548.0 & 254.8   \\
8.7--10.0  & 5,085.0 & 6,610.5 \\
10.0--11.1 & 2,548.0 & 2,802.8 \\
11.1--11.4 & 828.0   & 248.4   \\
11.4--12.0 & 204.0   & 122.4   \\
\hline
\end{tabular}
\end{table}

\begin{figure}[!t]
    \centering
    \includegraphics[width=\columnwidth]{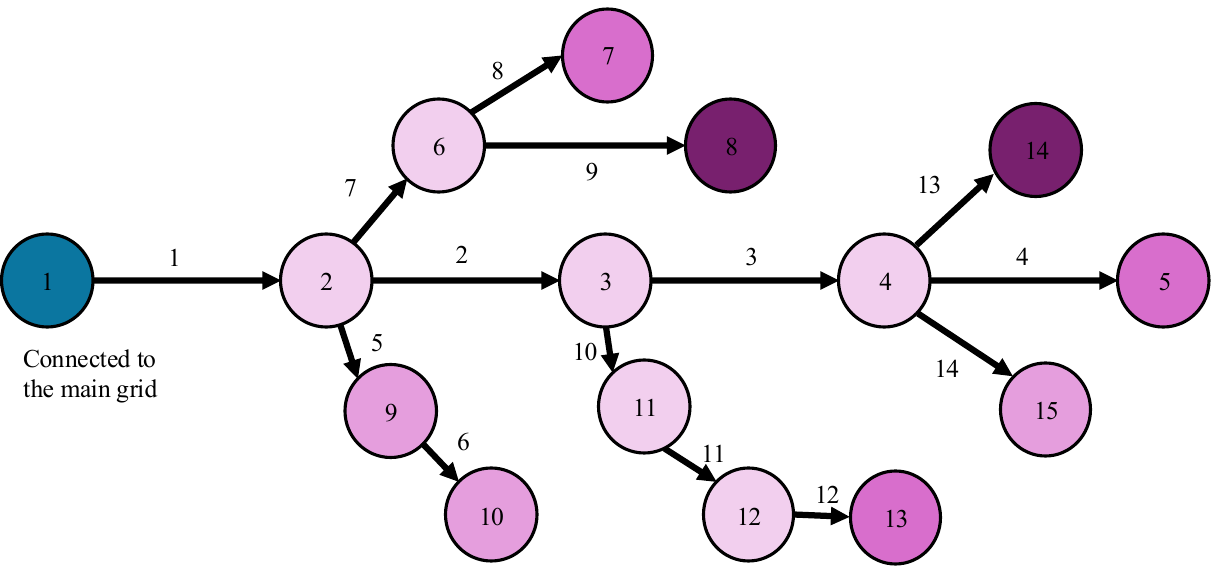}
    \caption{Nodes and lines indexing in the IEEE 15-bus system. Priority of loads, modeled by their associated VoLLs, is represented by the intensity of their color. Node "1" is assumed to be connected to the main grid.}
    \label{fig:15bus}
\end{figure}

\section{Conclusion}
\label{section: Conclusion}

This paper proposed a three-stage framework for scheduling MESSs on a coupled transportation-grid network, with the primary objective of enhancing grid resilience against extreme events.
The proposed approach leverages MESS capacity for performing energy arbitrage during normal operation, while proactively pre-staging these units on the TN prior to an outage to minimize their ETA at critical loads. To achieve this, a clustering and spatial search algorithm was developed to identify optimal pre-allocation sites on the TN.
In the post-disaster phase of the proposed framework, MESSs are dynamically relocated across the network in response to updates on grid conditions, ensuring the overall CoLL is continually minimized.
Simulation results show that we can achieve significant reduction in CoLL using the proposed framework in both pre-disaster stage, through reducing the ETA at loads, and post-disaster stage by adapting to the changes in the grid conditions while also being able to reduce the total generation cost during normal operation.

\FloatBarrier

\bibliographystyle{IEEEtran}
\bibliography{references}

@article{ahmadi2015mathematical,
  title={Mathematical representation of radiality constraint in distribution system reconfiguration problem},
  author={Ahmadi, Hamed and Mart{\'\i}, Jos{\'e} R},
  journal={Int. J. Electr. Power Energy Syst.},
  volume={64},
  pages={293--299},
  year={2015},
  publisher={Elsevier}
}

@ARTICLE{6507355,
  author={Farivar, Masoud and Low, Steven H.},
  journal={IEEE Trans. Power Syst.}, 
  title={Branch Flow Model: Relaxations and Convexification—Part I}, 
  year={2013},
  volume={28},
  number={3},
  pages={2554-2564},
  doi={10.1109/TPWRS.2013.2255317}}

@ARTICLE{6815671,
  author={Low, Steven H.},
  journal={IEEE Trans. Control Netw. Syst.}, 
  title={Convex Relaxation of Optimal Power Flow—Part II: Exactness}, 
  year={2014},
  volume={1},
  number={2},
  pages={177-189},
  doi={10.1109/TCNS.2014.2323634}}

@ARTICLE{25627,
  author={Baran, M.E. and Wu, F.F.},
  journal={IEEE Trans. Power Del.}, 
  title={Network reconfiguration in distribution systems for loss reduction and load balancing}, 
  year={1989},
  volume={4},
  number={2},
  pages={1401-1407},
  doi={10.1109/61.25627}}

@ARTICLE{7801854,
  author={Panteli, Mathaios and Pickering, Cassandra and Wilkinson, Sean and Dawson, Richard and Mancarella, Pierluigi},
  journal={IEEE Trans. Power Syst.}, 
  title={Power System Resilience to Extreme Weather: Fragility Modeling, Probabilistic Impact Assessment, and Adaptation Measures}, 
  year={2017},
  volume={32},
  number={5},
  pages={3747-3757},
  doi={10.1109/TPWRS.2016.2641463}}

@ARTICLE{9372331,
  author={Nazemi, Mostafa and Dehghanian, Payman and Lu, Xiaonan and Chen, Chen},
  journal={IEEE Trans. Smart Grid}, 
  title={Uncertainty-Aware Deployment of Mobile Energy Storage Systems for Distribution Grid Resilience}, 
  year={2021},
  volume={12},
  number={4},
  pages={3200-3214},
  doi={10.1109/TSG.2021.3064312}}

@article{ibne2020modeling,
  title={Modeling and assessing cyber resilience of smart grid using Bayesian network-based approach: A system of systems problem},
  author={Ibne Hossain, Niamat Ullah and Nagahi, Morteza and Jaradat, Raed and Shah, Chiranjibi and Buchanan, Randy and Hamilton, Michael},
  journal={J. Comput. Des. Eng.},
  volume={7},
  number={3},
  pages={352--366},
  year={2020},
  publisher={Oxford University Press}
}

@article{grotschel1989cutting,
  title={A cutting plane algorithm for a clustering problem},
  author={Gr{\"o}tschel, Martin and Wakabayashi, Yoshiko},
  journal={Math. Program.},
  volume={45},
  number={1},
  pages={59--96},
  year={1989},
  publisher={Springer}
}

@article{shi2022enhancing,
  title={Enhancing distribution system resilience against extreme weather events: Concept review, algorithm summary, and future vision},
  author={Shi, Qingxin and Liu, Wenxia and Zeng, Bo and Hui, Hongxun and Li, Fangxing},
  journal={Int. J. Electr. Power Energy Syst.},
  volume={138},
  pages={107860},
  year={2022},
  publisher={Elsevier}
}

@article{paul2024resilience,
  title={Resilience assessment and planning in power distribution systems: Past and future considerations},
  author={Paul, Shuva and Poudyal, Abodh and Poudel, Shiva and Dubey, Anamika and Wang, Zhaoyu},
  journal={Renew. Sustain. Energy Rev.},
  volume={189},
  pages={113991},
  year={2024},
  publisher={Elsevier}
}

@article{chuangpishit2023mobile,
  title={Mobile energy storage systems: A grid-edge technology to enhance reliability and resilience},
  author={Chuangpishit, Shadi and Katiraei, Farid and Chalamala, Babu and Novosel, Damir},
  journal={IEEE Power Energy Mag.},
  volume={21},
  number={2},
  pages={97--105},
  year={2023},
  publisher={IEEE}
}

@article{dugan2021application,
  title={Application of mobile energy storage for enhancing power grid resilience: A review},
  author={Dugan, Jesse and Mohagheghi, Salman and Kroposki, Benjamin},
  journal={Energies},
  volume={14},
  number={20},
  pages={6476},
  year={2021},
  publisher={MDPI}
}

@article{kim2018enhancing,
  title={Enhancing distribution system resilience with mobile energy storage and microgrids},
  author={Kim, Jip and Dvorkin, Yury},
  journal={IEEE Trans. Smart Grid},
  volume={10},
  number={5},
  pages={4996--5006},
  year={2018},
  publisher={IEEE}
}

@article{yao2019rolling,
  title={Rolling optimization of mobile energy storage fleets for resilient service restoration},
  author={Yao, Shuhan and Wang, Peng and Liu, Xiaochuan and Zhang, Huajun and Zhao, Tianyang},
  journal={IEEE Trans. Smart Grid},
  volume={11},
  number={2},
  pages={1030--1043},
  year={2019},
  publisher={IEEE}
}

@article{lu2024mobile,
  title={Mobile energy-storage technology in power grid: A review of models and applications},
  author={Lu, Zhuoxin and Xu, Xiaoyuan and Yan, Zheng and Han, Dong and Xia, Shiwei},
  journal={Sustainability},
  volume={16},
  number={16},
  pages={6857},
  year={2024},
  publisher={MDPI}
}

@article{aslam2025application,
  title={Application of energy storage systems to enhance power system resilience: A critical review},
  author={Aslam, Muhammad Usman and Miah, Md Sazal and Amin, BM Ruhul and Shah, Rakibuzzaman and Amjady, Nima},
  journal={Energies},
  volume={18},
  number={14},
  pages={3883},
  year={2025},
  publisher={MDPI}
}

@article{rajabzadeh2022improving,
  title={Improving the resilience of distribution network in coming across seismic damage using mobile battery energy storage system},
  author={Rajabzadeh, Mohammad and Kalantar, Mohsen},
  journal={J. Energy Storage},
  volume={52},
  pages={104891},
  year={2022},
  publisher={Elsevier}
}

@article{guo2023mobile,
  title={Mobile energy storage system scheduling strategy for improving the resilience of distribution networks under ice disasters},
  author={Guo, Xiaofang and Miao, Guixi and Wang, Xin and Yuan, Liang and Ma, Hengrui and Wang, Bo},
  journal={Processes},
  volume={11},
  number={12},
  pages={3339},
  year={2023},
  publisher={MDPI}
}

@article{zhou2024bi,
  title={A bi-level mobile energy storage pre-positioning method for distribution network coupled with transportation network against typhoon disaster},
  author={Zhou, Ke and Jin, Qingren and Feng, Bin and Wu, Lifang},
  journal={IET Renew. Power Gener.},
  volume={18},
  number={16},
  pages={3776--3787},
  year={2024},
  publisher={Wiley Online Library}
}

@article{hua2023robust,
  title={Robust emergency preparedness planning for resilience enhancement of energy-transportation nexus against extreme rainfalls},
  author={Hua, Zhihao and Zhou, Bin and Or, Siu Wing and Zhang, Jie and Li, Canbing and Wei, Juan},
  journal={IEEE Trans. Ind. Appl.},
  volume={60},
  number={1},
  pages={1196--1207},
  year={2023},
  publisher={IEEE}
}

@article{shen2023mobile,
  title={Mobile energy storage systems with spatial--temporal flexibility for post-disaster recovery of power distribution systems: A bilevel optimization approach},
  author={Shen, Yueqing and Qian, Tong and Li, Weiwei and Zhao, Wei and Tang, Wenhu and Chen, Xingyu and Yu, Zeyuan},
  journal={Energy},
  volume={282},
  pages={128300},
  year={2023},
  publisher={Elsevier}
}

@ARTICLE{10311549,
  author={Chen, Hongzhou and Xiong, Xiaofu and Zhu, Jizhong and Wang, Jian and Wang, Wei and He, Yufei},
  journal={IEEE Trans. Power Del.}, 
  title={A Two-Stage Stochastic Programming Model for Resilience Enhancement of Active Distribution Networks With Mobile Energy Storage Systems}, 
  year={2024},
  volume={39},
  number={4},
  pages={2001-2014},
  doi={10.1109/TPWRD.2023.3321062}
}

@ARTICLE{9634033,
  author={Lu, Zhuoxin and Xu, Xiaoyuan and Yan, Zheng and Shahidehpour, Mohammad},
  journal={IEEE Trans. Transp. Electrific.}, 
  title={Multistage Robust Optimization of Routing and Scheduling of Mobile Energy Storage in Coupled Transportation and Power Distribution Networks}, 
  year={2022},
  volume={8},
  number={2},
  pages={2583-2594},
  doi={10.1109/TTE.2021.3132533}
}

@Article{app142210367,
AUTHOR = {Lei, Hongtao and Jiang, Bo and Liu, Yajie and Zhu, Cheng and Zhang, Tao},
TITLE = {Two-Stage Optimization of Mobile Energy Storage Sizing, Pre-Positioning, and Re-Allocation for Resilient Networked Microgrids with Dynamic Boundaries},
JOURNAL = {Appl. Sci.},
VOLUME = {14},
YEAR = {2024},
NUMBER = {22},
ARTICLE-NUMBER = {10367},
URL = {https://www.mdpi.com/2076-3417/14/22/10367},
ISSN = {2076-3417},
DOI = {10.3390/app142210367}
}

@ARTICLE{10049743,
  author={Zhao, Zehua and Luo, Fengji and Zhu, Jizhong and Ranzi, Gianluca},
  journal={IEEE Trans. Smart Grid}, 
  title={Multi-Stage Mobile BESS Operational Framework to Residential Customers in Planned Outages}, 
  year={2023},
  volume={14},
  number={5},
  pages={3640-3653},
  doi={10.1109/TSG.2023.3245094}
}

@article{battaglia2018relational,
  title={Relational inductive biases, deep learning, and graph networks},
  author={Battaglia, Peter W and Hamrick, Jessica B and Bapst, Victor and Sanchez-Gonzalez, Alvaro and Zambaldi, Vinicius and Malinowski, Mateusz and Tacchetti, Andrea and Raposo, David and Santoro, Adam and Faulkner, Ryan and others},
  journal={arXiv:1806.01261},
  volume={10},
  year={2018}
}

@InProceedings{pmlr-v70-gilmer17a,
  title = 	 {Neural Message Passing for Quantum Chemistry},
  author =       {Justin Gilmer and Samuel S. Schoenholz and Patrick F. Riley and Oriol Vinyals and George E. Dahl},
  booktitle = 	 {Proc. 34th Int. Conf. Mach. Learn.},
  pages = 	 {1263--1272},
  year = 	 {2017},
  editor = 	 {Precup, Doina and Teh, Yee Whye},
  volume = 	 {70},
  series = 	 {Proc. Mach. Learn. Res.},
  month = 	 {06--11 Aug},
  publisher =    {PMLR},
}

\end{document}